\newtheorem{theorem}{Theorem}
\newtheorem{lemma}[theorem]{Lemma}
\newtheorem{corollary}[theorem]{Corollary}
\DeclareMathOperator{\sigSym}{\mathfrak{s}}
\DeclareMathOperator{\headSym}{in}
\DeclareMathOperator{\lcm}{lcm}
\newcommand{\mmorSym}{\phi}
\newcommand{\comment}[1]{}
\newcommand{\hideAppendix}[1]{#1}\newcommand{\appRef}[1]{Appendix \ref{#1}}\newcommand{\AppRef}[1]{Appendix \ref{#1}}
\newcommand{\repthm}[3]
{{
  \renewcommand{\thetheorem}{\ref{#2}}
  \begin{#1}
  {#3}
  \end{#1}
  \addtocounter{theorem}{-1}
}}
\newcommand{\p}{\ensuremath{^\prime}}
\newcommand{\biimp}{\Leftrightarrow}
\newcommand{\proj}[1]{\overline{#1}}
\newcommand{\mbasis}[1]{\boldsymbol{e}_{#1}}
\newcommand{\sig}[1]{\sigSym\left({#1}\right)}
\newcommand{\origBasis}{\mathcal G}
\newcommand{\extendBasis}[1]{\origBasis_{#1}}
\newcommand{\mmor}[1]{\mmorSym\left({#1}\right)}
\newcommand{\set}[1]{\left\{{#1}\right\}}
\newcommand{\setBuilder}[2]{\left\{{#1}\left|{#2}\right.\right\}}
\newcommand{\ideal}[1]{\left\langle{#1}\right\rangle}
\newcommand{\hd}[1]{\headSym({#1})}
\newcommand{\hdp}[1]{\hd{\proj{#1}}}
\newcommand{\spair}[2]{\ensuremath{\mathcal{S}\left({#1},{#2}\right)}}
\newcommand{\spoly}[2]{\ensuremath{\mathcal{P}\left({#1},{#2}\right)}}
\newcommand{\koz}[2]{\ensuremath{\mathcal{K}\left({#1},{#2}\right)}}
\newcommand{\ratid}[1]{\ensuremath{\tau_{#1}}}
\newcommand*{\defeq}{\mathrel{\vcenter{\baselineskip0.5ex \lineskiplimit0pt
                     \hbox{\scriptsize.}\hbox{\scriptsize.}}}%
                     =}
\newcommand{\sigdiv}{\ensuremath{\sigSym{}}-division}
\newcommand{\sigdive}{\ensuremath{\sigSym{}}-divide}
\newcommand{\sigred}{\ensuremath{\sigSym{}}-reduction}
\newcommand{\sigrede}{\ensuremath{\sigSym{}}-reduce}
\newcommand{\grobner}{Gr\"obner}
\newcommand{\galg}{SB}
\newcommand{\galgex}{the SB algorithm}
\newcommand{\proofPart}[1]{{\bf $\boldsymbol{#1}$:}}
\begin{document}
\conferenceinfo{ISSAC}{2012 Grenoble, France}
\title{Practical Gr\"obner Basis Computation}

\numberofauthors{2}
\author{
\alignauthor
Bjarke Hammersholt Roune\titlenote{Supported by The Danish Council for Independent Research | Natural Sciences.}\\
  \affaddr{Cornell University}\\
  \affaddr{Department of Mathematics}\\
  \affaddr{Ithaca 14853-4201, USA}\\
  \email{www.broune.com}
 \alignauthor
Michael Stillman\titlenote{Partially supported by NSF grant DMS-1002210}\\
  \affaddr{Cornell University}\\
  \affaddr{Department of Mathematics}\\
  \affaddr{Ithaca 14853-4201, USA}\\
  \email{mike@math.cornell.edu}
}

\maketitle
\begin{abstract}
We report on our experiences exploring state of the art Gr\"obner
basis computation. We investigate signature based algorithms in
detail. We also introduce new practical data structures and
computational techniques for use in both signature based Gr\"obner
basis algorithms and more traditional variations of the classic
Buchberger algorithm. Our conclusions are based on experiments using
our new freely available open source standalone C++ library.
\end{abstract}

\section{Introduction}

Since F5 \cite{f5}, there have been several signatured-based
algorithms. Recently Gao, Volny and Wang (GVW) \cite{gvwSyzygyF5} have
introduced a signature algorithm that generalizes several previous
algorithms. Arri and Perry (AP) \cite{apSyzygyF5} have published a
very similar algorithm. We will refer to both algorithms as
\emph{\galg{}} for Signature Basis algorithm.

We give another way of describing \galg{} (Section \ref{sec:alg}) and
we employ a rewriting criterion that improves on that of GVW. We show
how this rewriting criterion can be used to eliminate S-pairs (Section
\ref{sec:termChoice}). This enables us to characterize the number of
S-pair reductions that \galg{} performs in terms of the final basis
and the initial submodule of the module of syzygies (Theorem
\ref{thm:characReductions}).

We introduce new practical data structures and computational
techniques for use in both signature based Gr\"obner basis algorithms
and more traditional variations of the classic Buchberger algorithm
(sections \ref{sec:sigImp} and  \ref{sec:ds}).

Our ex\-pe\-ri\-ments are based on our new freely available open
source standalone C++ library (Section \ref{sec:exp}) \cite{onlinePaper}. 
The library was
written with the intention to use it in Macaulay 2, but it does not
depend on any component of Macaulay 2 and could be used by any other
system as well. It is also possible to take just the data structures
from the library and use them in another codebase. We welcome all to
read, use and modify the code. We are happy to receive suggestions
for improvements.

Due to the page limit the paper is of necessity compressed. {\bf The
  proofs and other supporting material are available in appendices
  that do not appear in the printed paper}, but that are available
online \cite{onlinePaper}.

\section{The \galg{} Algorithm}
\label{sec:alg}

In this section we introduce a simplest possible version of \galg{}
from first principles. The description differs from both GVW and AP
though the results are not new. See \appRef{app:alg} for proofs.

\subsection{Notation and Terminology}
\label{sec:algsetup}

Let $R$ be a polynomial ring over a field. Let $\origBasis$ be a
finite set of non-zero polynomials of $R$ indexed as
$g_1,\ldots,g_m$. Consider the free module $R^m$ and define the
homomorphism $u\mapsto\proj u\colon R^m\rightarrow R$ by $\proj
u\defeq\sum_{i=1}^mu_ig_i.$ We say that $u$ is a \emph{representation}
of $\proj u$. Then by definition
$\ideal{\origBasis}\defeq\proj{R^m}=\setBuilder{\bar u}{u\in R^m}$.

Let $\mbasis 1,\ldots,\mbasis m$ be the standard basis of unit vectors
in $R^m$. Let $\leq$ denote two different term orders -- one on $R^m$
and one on $R$. We require these two orders to be related such that
$a\leq b\biimp a\mbasis i\leq b\mbasis i$ for all monomials $a,b$ and
$i=1,\ldots,n$. For both orders we use the convention that $0<t$ for
all terms $t$.

Let the \emph{signature} $\sig u$ be the $\leq$-maximal term of $u\in
R^m$ and define $\sig 0=0$. Let the \emph{lead term} $\hd f$ be the
$\leq$-maximal term of $f\in R$ and define $\hd{0}=0$. In this way
every $u\in R^m$ has two main associated characteristics -- the
signature $\sig u$ and the lead term $\hdp u$ of its image in $R$.

We will consider extensions $\extendBasis n\supseteq\origBasis$ of
additional non-zero polynomials from $\ideal{\origBasis}$ indexed as
$
g_1,\ldots,g_m,g_{m+1},\ldots,g_n
$
where $m\leq n$. Define $v\mapsto\proj v$, the basis $\mbasis
1,\ldots,\mbasis n$ and representation relative to $\extendBasis n$ as
above. Each extension $\extendBasis n$ will come with a homomorphism
$\mmorSym\colon R^n\rightarrow R^m$ such that $\proj{u}=\proj{\mmor
  u}$. Then in particular $\mmor{\mbasis i}$ for $i>m$ is a
representation of $g_i$ in terms of $g_1,\ldots,g_m$. We extend the
definition of signature from $R^m$ to $R^n$ by $\sig u\defeq
\sig{\mmor u}$. A \emph{syzygy} is a $u\in R^n$ or $u\in R^m$ such
that $\proj u=0$. None of the $\mbasis i$ are syzygies as
$\proj{\mbasis i}=g_i\neq 0$.

Note that the signature of $u\in R^n$ depends on $\mmorSym$ even if
that is not apparent from the notation $\sig u$. An element $f\in R$
can have many different representations in $R^n$ with distinct
signatures. We write $a\simeq b$ for two terms $a$ and $b$ if $a=sb$
where $s$ is a non-zero element of the ground field. So $a\leq b\leq
a$ if and only if $a\simeq b$.

A \emph{monomial} is a polynomial with exactly one term. A monomial or
term with a coefficient of 1 is \emph{monic}. Neither terms nor
monomials are necessarily monic. In particular $\sig u$ and $\hdp u$
for $u\in R^n$ are not necessarily monic.

\subsection{Division With Signatures}

\galg{} uses a notion of division called \emph{\sigdiv{}}. It is
similar to classic polynomial division. The main difference is that we
start with an element $u\in R^n$ instead of an element of $R$ and we
take signature into account. The result of division of $u\in R^n$ by
$\extendBasis n$ is a quotient $q\in R^n$ and a remainder $r\in R^n$
such that
\begin{enumerate}
\item $u=q+r$,
\item $\hdp u\geq\hd{q_ig_i}\text{ for
}i=1,\ldots,n$,
\item if $\sig u\geq\sig{a\mbasis i}$ for a monomial $a$
then $\hdp{a\mbasis i}$ does not equal any term of $\proj r$,
\item $\sig u\geq\sig q$.
\end{enumerate}

We \emph{\sigdive{}} to get the quotient $q$ and we \emph{\sigrede{}}
to get the remainder $r$. $u$ is \emph{reduced} if $q=0$ and otherwise
$u$ is \emph{reducible}. We say that $u$ \emph{reduces to zero} if
$\proj r=0$.

The first three conditions are similar to conditions on the quotient
and remainder in classic polynomial division. The fourth condition
disallows some reduction steps. Without signatures the condition for
$g_i$ to reduce a term $t$ of $f$ is just that $\hd{g_i}|t$. We would
then determine the monomial $a$ such that $\hd{ag_i}=t$ and reduce to
$f-ag_i$.

With signatures it is not sufficient that $\hd{g_i}|t$ in order for
$\mbasis i$ to reduce a term $t$ of $\proj u$. Here $\mbasis i$
reduces $t$ when there is a term $a$ such that $\hdp{a\mbasis i}=t$
and $\sig{a\mbasis i}\leq\sig u$. The outcome of such a reduction step
is then $u-a\mbasis i$. So a reduction step happens when it can be
carried out without strictly increasing the signature. A reduction
step is \emph{singular} if $\sig{a\mbasis i}\simeq\sig u$. When
$\mbasis i$ reduces $t$ it is convenient to also say that $a\mbasis i$
reduces $t$ so that $a$ is introduced right then.

In \emph{top \sigred{}} the reduction stops when the lead term of
$\proj u$ cannot be reduced. We say that $u$ is \emph{top reducible}
if the lead term of $\proj u$ can be reduced and otherwise $u$ is
\emph{top reduced}.

In \sigdiv{} the signature is not allowed to increase. \emph{Regular
  division} is like \sigdiv{} except that the coefficient of the
signature is not allowed to change either. So the difference is that
singular reduction steps are not allowed.

A basis $\extendBasis n$ is a \emph{signature \grobner{} basis} if all
$u\in R^n$ \sigrede{} to zero. Then $\extendBasis n$ maps to a
\grobner{} basis $\proj{\extendBasis n}$ as then all elements of
$\ideal{\proj{\extendBasis n}}=\proj{R^n}$ reduce to zero.

\subsection{S-pairs}
\label{sec:spairs}

The \emph{S-pair} between $\mbasis i$ and $\mbasis j$ is $\spair i
j\defeq \spair {\mbasis i} {\mbasis j}\defeq
\frac{\hd{g_j}}{d}\mbasis i-
\frac{\hd{g_i}}{d}\mbasis j$
where $d\defeq \gcd\left(\hd{g_i},\hd{g_j}\right)$.
If $\sig{a\mbasis i}\simeq\sig{b\mbasis j}$ then the S-pair is
\emph{singular} and otherwise it is \emph{regular}. By ``S-pair'' we
always mean ``regular S-pair''.

\galg{} reduces S-pairs using regular reduction steps and adds the
regular reduced result to the basis if it is not a syzygy and not
singular top reducible. Theorem \ref{thm:spairs} implies that when all
S-pairs have been processed in this fashion, then the basis is a
signature \grobner{} basis.

\begin{theorem}
\label{thm:spairs}
Let $T$ be a term of $R^m$. Assume for all S-pairs $p$ with $\sig
p\leq T$ that if $p\p$ is the result of regular reducing $p$, then
$p\p$ is singular top reducible or a syzygy. Then all elements $u\in
R^n$ with $\sig u\leq T$ \sigrede{} to zero.
\end{theorem}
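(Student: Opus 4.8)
The plan is a minimal-counterexample argument, using that a term order well-orders the monomials of $R^m$, so signatures are well-ordered. It is convenient to first reformulate the conclusion. Call $q\in R^n$ a \emph{bounded standard representation} of $f$ relative to a term $S$ if $\proj q=f$, $\hd{q_ig_i}\leq\hd f$ for all $i$, and $\sig q\leq S$. Then $u\in R^n$ \sigrede{}s to zero if and only if $\proj u$ has a bounded standard representation relative to $\sig u$: the quotient of any \sigred{} of $u$ to zero is such a $q$ by conditions (2) and (4), and conversely, given such a $q$, the pair $(q,\,u-q)$ is a valid \sigred{} of $u$ with zero remainder. So it suffices to show every $v\in R^n$ with $\sig v\leq T$ has a bounded standard representation of $\proj v$ relative to $\sig v$. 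Suppose not, and choose a counterexample $v$ with $\sig v$ minimal and, among counterexamples with signature $\simeq\sig v$, with $\hdp v$ minimal. Routine arguments show $\proj v\neq 0$ (else $q=0$ works) and that $v$ is neither regular nor singular top reducible: a top reduction step produces either an element of strictly smaller signature or one of signature $\simeq\sig v$ with strictly smaller lead term, and that element is again a counterexample — prepending the step to a bounded standard representation of it yields one for $v$ — contradicting minimality. Hence no $a\mbasis i$ with $\sig{a\mbasis i}\leq\sig v$ has $\hdp{a\mbasis i}=\hdp v$.

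The heart of the proof is a Buchberger-style rewriting of representations. Among all $q\in R^n$ with $\proj q=\proj v$ and $\sig q\leq\sig v$ — a nonempty set, as $q=v$ qualifies — choose one minimizing first $M(q)\defeq\max_i\hd{q_ig_i}$ and then $\card{\set{i:\hd{q_ig_i}\simeq M(q)}}$. Since $v$ is a counterexample, $M\defeq M(q)>\hdp v$. Let $J$ be the corresponding index set; the $M$-terms of the $q_ig_i$, $i\in J$, must cancel since $\hd{\proj q}=\hdp v<M$, so $\card J\geq 2$, and we fix $i,j\in J$. Using $\hd{q_i}\simeq M/\hd{g_i}$ and $\hd{q_j}\simeq M/\hd{g_j}$, one sees $M$ is a monomial multiple $\nu\lcm(\hd{g_i},\hd{g_j})$, and that the two terms of $\nu\,\spair i j$ are scalar multiples of $\hd{q_i}\mbasis i$ and $\hd{q_j}\mbasis j$, with image lead term $<M$. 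If $\spair i j$ is singular, or if $\sig{\nu\,\spair i j}<\sig v$, then minimality of $\sig v$ applied to $\nu\,\spair i j$ shows $\proj{\spair i j}$ has a bounded standard representation with image lead terms below $\lcm(\hd{g_i},\hd{g_j})$; otherwise $\sig{\spair i j}\leq\sig v\leq T$ and the hypothesis applies, so the regular reduction $p\p$ of $\spair i j$ is a syzygy or singular top reducible, and in either case one extracts the same kind of representation of $\proj{\spair i j}$. Substituting the $\nu$-scaled such representation into $q$ in place of $q_ig_i+q_jg_j$ gives a $q\p$ with $\proj{q\p}=\proj v$, $\sig{q\p}\leq\sig v$, and $(M(q\p),\card{J(q\p)})$ strictly smaller in the lexicographic order — contradicting the choice of $q$.

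The step I expect to be the main obstacle is the signature bookkeeping suppressed above: making sure $\sig{\nu\,\spair i j}\leq\sig v$, and that the substitution does not raise $\sig{q\p}$ above $\sig v$. The clean way to arrange this is to carry the stronger invariant $\sig{q_\ell\mbasis\ell}\leq\sig v$ for every $\ell$ throughout the rewriting — which makes the bound on $\sig{\nu\,\spair i j}$ immediate — and to verify both that $\proj v$ admits such a per-component-bounded representation initially and that every elementary rewrite preserves it. Intertwined with this is the singular-top-reducible subcase, in which $p\p$ must be pared down further, using that monomial multiples of the $\mbasis k$ have trivial bounded standard representations and that singular steps strictly lower the lead term; making that termination argument mesh with the signature bound is, I think, the most delicate point.
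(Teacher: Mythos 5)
Your strategy --- a Buchberger-style rewriting of representations, driven by a minimal counterexample in the lexicographic order on $(\sig{v},\hdp{v})$ --- is genuinely different from the paper's proof, which never rewrites representations at all: the paper takes a signature-minimal top reduced counterexample $u$, uses Lemmas \ref{lem:p1} and \ref{lem:p2} to manufacture an S-pair $p$ with $\sig{ap}=\sig{u}$ whose regular reduction $p\p$ has $ap\p$ regular top reduced, and then invokes uniqueness of regular top reduced lead terms in a fixed signature (Lemma \ref{lem:p3}) to get $\hdp{ap\p}=\hdp{u}$, so the hypothesized singular top reducer of $p\p$ top reduces $u$ --- contradiction. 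Your route is not unreasonable, but as written it stalls exactly at the point you flag, and the repair you sketch does not close the hole. The invariant $\sig{q_\ell\mbasis{\ell}}\leq\sig{v}$ (you in fact need the per-term version $\sig{t\mbasis{\ell}}\leq\sig{v}$ for every term $t$ of $q_\ell$, since cancellation inside $q_\ell\mmor{\mbasis{\ell}}$ can make $\sig{q_\ell\mbasis{\ell}}$ small while $\sig{\hd{q_\ell}\mbasis{\ell}}$ is large, and it is the latter that controls $\sig{\nu\spair{i}{j}}$) fails for the obvious starting representation $q=v$: the components of an arbitrary $v\in R^n$ can have individual signatures far above $\sig{v}$ that cancel under $\mmorSym$. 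You never produce an admissible initial $q$. One exists --- take $q\defeq\mmor{v}\in R^m\subseteq R^n$, each of whose terms $t\mbasis{\ell}$ satisfies $t\mbasis{\ell}\leq\sig{\mmor{v}}=\sig{v}$ --- but without that observation your induction has no base.

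More seriously, your case split is wrong when $\spair{i}{j}$ is singular. The theorem's hypothesis is then unavailable (``S-pair'' means regular S-pair throughout the paper), and you claim minimality of $\sig{v}$ applies to $\nu\spair{i}{j}$. But if the two like-signature terms of $\mmor{\nu\spair{i}{j}}$ fail to cancel (their coefficients need not agree), then $\sig{\nu\spair{i}{j}}\simeq\sig{\hd{q_i}\mbasis{i}}$, which can be $\simeq\sig{v}$; and your secondary induction on lead terms does not apply either, because you only know $\hdp{\nu\spair{i}{j}}<M$ while $M>\hdp{v}$, so $\hdp{\nu\spair{i}{j}}$ may still exceed $\hdp{v}$. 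In that subcase neither the hypothesis nor the inductive assumption yields the needed representation, and the rewriting cannot proceed. A variant of the same difficulty sits in the singular-top-reducible branch when $\nu=1$: paring $p\p$ down by a singular top reducer lowers the lead term but need not lower the signature class below $\sig{v}$, so the ``termination argument'' you defer is not yet supported by a decreasing measure. These are precisely the complications the paper sidesteps by proving Lemma \ref{lem:p3} first: uniqueness of the regular reduced form within a signature does all of the bookkeeping that your representation rewriting would otherwise have to do by hand.
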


The outcome of polynomial reduction depends on the choice of reducer,
so the choice of reducer can change what the intermediate basis is in
the classic Buchberger algorithm. In contrast to this, Lemma
\ref{lem:p3} implies that the regular reduction of S-pairs has a
uniquely determined remainder.

\begin{lemma}
\label{lem:p3}
Let $L\in R^m$ be a term such that all $v\in R^n$ with $\sig v<L$
\sigrede{} to zero. Let $a,b\in R^n$ such that $\sig a=\sig b\leq
L$. Then $\hdp a=\hdp b$ if $a$ and $b$ are regular top reduced. Also,
$\proj a=\proj b$ if $a$ and $b$ are regular reduced.
\end{lemma}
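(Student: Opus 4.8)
The plan is to argue by contradiction using a minimal counterexample in the signature, exploiting the hypothesis that everything strictly below $L$ already \sigrede{}s to zero. Suppose $a,b\in R^n$ are regular top reduced with $\sig a=\sig b\leq L$ but $\hdp a\neq\hdp b$. Form the difference $c\defeq s_a a - s_b b$ where the field coefficients $s_a,s_b$ are chosen so that the signature terms of $s_a a$ and $s_b b$ are equal and cancel. Then $\sig c<\sig a=\sig b\leq L$, so by hypothesis $c$ \sigrede{}s to zero, i.e.\ $\proj c=0$ (more precisely, its regular reduced remainder has zero image, but since we only need the image we can use that $\proj c$ lies in the ideal and, having signature below $L$, reduces to $0$). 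Hence $s_a\proj a = s_b\proj b$, and therefore $\proj a$ and $\proj b$ have the same lead term, i.e.\ $\hdp a\simeq\hdp b$. This already gives the first claim up to a field scalar; I would then note that since $\sig a=\sig b$ (equal as terms, not merely $\simeq$), tracking the coefficients through the construction of $c$ forces $s_a=s_b$ when normalized appropriately, giving $\hdp a=\hdp b$ on the nose. Actually the cleanest route is: the coefficient of $\sig a$ in $a$ equals the coefficient of $\sig b$ in $b$ because $\sig a = \sig b$ as terms — wait, that need not hold. Let me instead normalize: it suffices to prove the statement when $\sig a=\sig b$ are \emph{identical terms}; scaling $a,b$ so the coefficient of the signature is $1$ in each, the difference $c=a-b$ has strictly smaller signature, reduces to zero, so $\proj a=\proj b$, and in particular $\hdp a=\hdp b$.

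For the second claim (regular reduced, not just regular top reduced), I would iterate. Once we know $\proj a$ and $\proj b$ share a lead term, consider $c=a-b$ (after the normalization above). We have shown $\sig c<L$ and hence $\proj c=0$, so in fact $\proj a=\proj b$ directly. So the second claim is actually immediate from the same argument — the subtlety is only that for the \emph{top} reduced case we cannot conclude $\proj a=\proj b$, only $\hdp a=\hdp b$, because top reduction leaves lower terms unconstrained; but the difference argument still pins down the lead term since $\proj c = 0$ would force $\proj a = \proj b$, contradicting that only the lead terms need agree. Here I need to be careful: if $a,b$ are only top reduced, $c=a-b$ may itself be reducible, but it still has $\sig c < L$ so it \sigrede{}s to zero, meaning $\proj c$ (its image) — hmm, $c$ \sigrede{}s to zero means $\proj r=0$ where $r$ is the remainder, but $\proj r=\proj c$ since reduction preserves the image. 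So $\proj c=0$ regardless, giving $\proj a=\proj b$ in both cases. The distinction in the statement must therefore be that ``regular top reduced'' is the weaker hypothesis under which we still get the (weaker-looking but here equivalent) lead-term conclusion; I would phrase the proof to handle both uniformly.

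The key technical point to verify carefully is that $\sig c<L$ \emph{strictly}: this holds because the signature terms of the normalized $a$ and $b$ are identical and cancel in $c=a-b$, and all other terms of $a$ and $b$ have signature strictly below $\sig a=\sig b\leq L$. One must check this cancellation is legitimate, i.e.\ that $\sig a$ and $\sig b$ really are the same monic term after normalization — which is exactly the hypothesis $\sig a=\sig b$ as terms (the lemma writes $\sig a=\sig b$, equality of terms, stronger than $\simeq$). A second point: $c$ could be a syzygy ($\proj c=0$ from the start, e.g.\ if $a=b$), in which case the conclusion is trivial; otherwise $c\neq 0$ and $\sig c$ is a well-defined term $<L$, and we invoke the hypothesis.

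The main obstacle I anticipate is not the logical structure, which is a standard minimal-signature / S-pair-style cancellation argument, but rather bookkeeping the field coefficients precisely enough to upgrade $\hdp a\simeq\hdp b$ to $\hdp a=\hdp b$ and $\proj a\simeq\proj b$ to $\proj a=\proj b$. This is handled by the normalization step — rescaling so that the signatures are literally equal monic-times-fixed-coefficient — after which $c=a-b$ kills the signature exactly and the conclusions come out with equality rather than merely up to scalar. Everything else reduces to the hypothesis that signatures below $L$ force reduction to zero, applied to $\sig c<L$.
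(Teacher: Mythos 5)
Your normalization step and the observation that $\sig{a-b}<L$ are both fine, but the core of your argument rests on a false equivalence: you conclude from ``$a-b$ \sigrede{}s to zero'' that $\proj{a-b}=0$, justifying this with ``reduction preserves the image.'' It does not. By the paper's definition, \sigdiv{} writes $u=q+r$ and ``$u$ reduces to zero'' means $\proj r=0$, i.e.\ $\proj u=\proj q$ is expressible in terms of the basis within the signature bound --- not that $\proj u=0$. Each reduction step replaces $u$ by $u-t\mbasis i$ and hence changes $\proj u$ by $-tg_i$; if reducing to zero forced the image to vanish, every element of signature below $L$ would be a syzygy, which is absurd. A symptom of the error is that your argument never uses the hypothesis that $a$ and $b$ are regular (top) reduced, and without that hypothesis the conclusion is false (take $b=a+w$ with $\sig w$ very small and $\proj w\neq 0$). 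Your last paragraph even talks you into $\proj a=\proj b$ in the merely top-reduced case, which is not what the lemma asserts and is not true.

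The missing step is the transfer of reducibility back to $a$ or $b$, which is where the regular-reducedness hypothesis must enter. Suppose $\proj{a-b}\neq 0$. From $\sig{a-b}<L$ the difference \sigrede{}s to zero, hence is in particular \emph{top reducible}: there is a reducer $t\mbasis i$ with $\hdp{t\mbasis i}=\hdp{a-b}$ and $\sig{t\mbasis i}\leq\sig{a-b}<\sig a=\sig b$. The monic part of $\hdp{a-b}$ occurs as a term of $\proj a$ or of $\proj b$ (say of $\proj a$, after swapping if necessary); then a scalar multiple of $t\mbasis i$ reduces that term of $a$, and because its signature is strictly below $\sig a$ this is a \emph{regular} reduction step, contradicting that $a$ is regular reduced. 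This proves $\proj a=\proj b$ in the regular reduced case, and the top-reduced statement $\hdp a=\hdp b$ follows as a corollary (the paper reduces the first claim to the second in exactly this way). That contradiction argument is the piece your proposal skips over.
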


Theorem \ref{thm:syzygies} implies that the S-pairs of a signature
\grobner{} basis give rise to a \grobner{} basis of the syzygy module
of $\origBasis$. Note that we are talking about the syzygy module of
the \emph{original} basis $\origBasis$ rather than the syzygy module
of a \grobner{} basis of the same ideal. The former is generally much
harder to compute than the latter. GVW present this result, while AP
essentially prove it but do not state it.

\begin{theorem}
\label{thm:syzygies}
Let $\extendBasis n$ be a signature \grobner{} basis and let $u\in
R^m$ be a syzygy. Then there is an S-pair $p$ that regular reduces to
a syzygy $p\p$ such that $\sig{p\p}$ divides $\sig u$.
\end{theorem}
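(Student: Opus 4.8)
The plan is to take a syzygy $u \in R^m$ and run it through regular \sigred{}; because $\extendBasis n$ is a signature \grobner{} basis we know $u$ \sigrede{}s to zero, but I want to extract from this an \emph{S-pair} whose signature divides $\sig u$. First I would note that we may assume $u$ is a minimal counterexample in the sense that $\sig v < \sig u$ forces the conclusion for any syzygy $v$; so by Lemma \ref{lem:p3} (applied with $L = \sig u$) the regular reduction of anything with smaller signature is essentially canonical, and any syzygy with strictly smaller signature already has an S-pair of the desired form. The idea is then to look at the first step of a regular \sigred{} of $u$ and see that it must be governed by an S-pair at signature $\sig u$.

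The key steps, in order: (1) Since $\proj u = 0$, $u$ is not regular reduced unless $u = 0$, which it is not (a syzygy is nonzero by our conventions). So there is a regular reduction step $a\mbasis i$ applicable to $\hdp u$, with $\sig{a\mbasis i} < \sig u$ (strictly, since the step is regular, not singular — actually $\sig{a\mbasis i}\le\sig u$ with equal coefficient disallowed, so $\sig{a\mbasis i}<\sig u$ after accounting for coefficients, or we treat the singular case separately). (2) The lead term $\hdp u$ is a common multiple of $\hd{g_i}$ and of $\hd{g_j}$ for some other generator index $j$ — here I would argue that because the whole thing is a syzygy that cancels, some \emph{second} reducer $b\mbasis j$ must also hit the same lead term $\hdp u$; the minimal such pair gives an S-pair $\spair i j$ whose lead term divides $\hdp u$ and whose signature divides $\sig u$. (3) That S-pair $p = \spair i j$ then satisfies $\sig p \mid \sig u$; regular reducing $p$ gives $p\p$, and since $\extendBasis n$ is a signature \grobner{} basis $p\p$ \sigrede{}s to zero, so in fact $p\p$ is a syzygy (if $p\p$ were singular top reducible rather than a syzygy, one more singular step lowers things and we recurse on a smaller signature, appealing to minimality). (4) Conclude $\sig{p\p} = \sig p \mid \sig u$ (regular reduction preserves the signature, including its coefficient), which is exactly what is claimed.

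I expect the main obstacle to be step (2): making rigorous the claim that the cancellation forcing $\proj u = 0$ produces \emph{two} reducers sharing the lead term $\hdp u$ and hence an S-pair with the right divisibility on both the lead term and the signature. The natural tool is Lemma \ref{lem:p3} together with the S-pair cover / syzygy-to-S-pair correspondence implicit in Theorem \ref{thm:spairs}: if $\hdp u$ could be reduced by a \emph{single} $a\mbasis i$ with $\sig{a\mbasis i} < \sig u$, then $u - a\mbasis i$ is again a syzygy (its image is $\proj u - \proj{a\mbasis i} = -\proj{a\mbasis i}$, which is not zero — so this is where care is needed) with signature $\sig u$ but strictly smaller lead term, and iterating we eventually reach a situation where two reducers collide at the top; at that collision the difference is (a scalar multiple of) an S-pair $\spair i j$ with $\sig{\spair i j} \le \sig u$, and a short argument using the definition of S-pair signature shows $\sig{\spair i j} \mid \sig u$. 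The singular-top-reducible alternative in the hypothesis of Theorem \ref{thm:spairs} is handled by the same recursion, since a singular step strictly decreases the signature's... actually decreases the lead term while keeping the signature, so one must be slightly careful to phrase the induction on the pair $(\sig u, \hdp u)$ ordered lexicographically.
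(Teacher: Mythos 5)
Your plan founders at its first step, and the difficulty you flag in step (2) is fatal rather than merely ``where care is needed.'' Reduction in this setting acts on the terms of $\proj u$, not on the terms of $u\in R^m$; since $u$ is a syzygy we have $\proj u=0$, so there is nothing to reduce, $\hdp u=0$, and $u$ is already (top) reduced. Your step (1) asserts the opposite (``$u$ is not regular reduced unless $u=0$''), and the recursion ``reduce $\hdp u$ by a single $a\mbasis i$ and pass to $u-a\mbasis i$'' cannot start; moreover, as you yourself notice, $u-a\mbasis i$ would not be a syzygy, so the proposed induction on $(\sig u,\hdp u)$ carries no invariant. The idea actually needed --- and absent from your proposal --- is to split $u=(u-L)+L$ where $L=\sig u=a\mbasis i$ is the signature term itself. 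Then $\sig{u-L}<\sig u$, so $u-L$ \sigrede{}s to zero (using that $\extendBasis n$ is a signature \grobner{} basis) and in particular is top reducible by some $b\mbasis j$; the term $L$ is top reducible by itself; and since the sum $u$ is top reduced, the two lead terms must cancel, $\hdp{u-L}+\hdp{L}=0$. This collision of $a\mbasis i$ and $b\mbasis j$ at a common lead term, with $\sig{a\mbasis i}>\sig{b\mbasis j}$, is what yields the S-pair $p=\spair i j$ with $\sig p$ dividing $\sig u$; this is exactly the paper's Lemma \ref{lem:p1}.

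The second gap is in step (3): from ``$p\p$ \sigrede{}s to zero'' you conclude ``$p\p$ is a syzygy.'' That is a non sequitur: in a signature \grobner{} basis \emph{every} element \sigrede{}s to zero, yet a regular-reduced S-pair remainder is typically a new basis element with $\proj{p\p}\neq 0$; distinguishing the two cases is precisely the point of syzygy versus basis signatures in Section \ref{sec:spairElim}. The correct argument goes through Lemma \ref{lem:p3}: with $a$ the monomial such that $\sig{ap}=\sig u$, both $ap\p$ and $u$ are regular (top) reduced with equal signature, so $\hdp{ap\p}=\hdp u=0$, whence $\proj{p\p}=0$. Your step (4), that $\sig{p\p}=\sig p$ divides $\sig u$, is fine once the rest is repaired.
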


\galg{} is known to terminate --- see \appRef{app:term}.

\subsection{Pseudo code}
\label{sec:pseudo}

Here is pseudo code for a \emph{simplest possible version} of
\galg{}. This pseudo code should not be taken as a guide to efficient
implementation. The code computes a signature \grobner{} basis
$\extendBasis n$ and the initial submodule $H$ of the syzygy module of
$g_1,\ldots,g_m$. An actual implementation would keep track of monic
pairs $(h_i,s_i)$ where $h_i\defeq \proj{\mbasis i}$ and
$s_i\defeq\sig{\mbasis i}$ instead of maintaining a full
representation $\mmor{\mbasis i}$ of each $g_i$.  \ \\ \ \\ {\bf
  SignatureBuchberger($\set{g_1,\ldots,g_m}\subseteq R$)}
\begin{algorithmic}
\STATE $n\gets m$
\STATE $S\gets \setBuilder{\spair i j}{1\leq i<j\leq m\text{ and $\spair i j$ is regular}}$
\STATE $H\gets \ideal 0\subseteq R^m$
\WHILE{$S\neq\emptyset$}
  \STATE $p\gets$ an element of $S$ with $\leq$-minimal signature
  \STATE $S\gets S\setminus\set p$
  \STATE $p\p\gets$ result of regular reducing $p$
  \IF {$\proj{p\p}=0$}
    \STATE $H\gets H+\ideal{\sig{p\p}}$
  \ELSIF {$p\p$ is not singular top reducible}
    \STATE $n\gets n + 1$
    \STATE $\phi(\mbasis n)\gets \phi(p\p)$\quad
      \COMMENT{implies $g_n=\proj{p\p}$ and $\sig{\mbasis n}=\sig{p\p}$}
    \STATE $S\gets S\cup\setBuilder{\spair i n}{i < n\text{ and $\spair i j$ is regular}}$
  \ENDIF
\ENDWHILE
\end{algorithmic}

\section{Signature Alg. Improvements}
\label{sec:sigImp}

In this section we show techniques that improve signature \grobner{}
basis computation. Several of these techniques apply to signature
algorithms in general rather than just \galg{}.

\subsection{S-pair Elimination}
\label{sec:spairElim}

There are many S-pairs that it is not necessary for \galg{} to reduce
in order to arrive at a signature \grobner{} basis. We say that we
\emph{eliminate} an S-pair when we determine that it is not necessary
to reduce that S-pair. The S-pair elimination criteria that we present
here in Section \ref{sec:spairElim} are already present in both GVW
and AP.

Three things can happen when \galg{} regular reduces an S-pair in
signature $T$ and gets a remainder $r$. First, $r$ might be a syzygy
in which case its signature is added to the set of known syzygy
signatures. Second, $r$ might be singular top reducible in which case
$r$ is thrown away. Third, if $r$ is not a syzygy and not singular top
reducible, then $r$ is added to the basis. For these three cases we
say respectively that $T$ is a \emph{syzygy}, \emph{singular} or
\emph{basis} signature. These notions are well defined since Lemma
\ref{lem:p3} implies that regular reduction of S-pairs yields a
uniquely determined remainder.

Recall from the definition of S-pair that any mention in this paper of
S-pairs refers to regular S-pairs. So \galg{} can right away eliminate
any S-pair that is not regular.

If there is more than one S-pair in signature $T$, then we only have
to reduce one of them since Lemma \ref{lem:p3} implies that they will
all have the same remainder upon regular reduction. Section
\ref{sec:termChoice} develops this topic further.

Suppose that $T$ is an S-pair signature and we know of a syzygy
signature $L$ that divides $T$. Then $T$ is a syzygy signature by
Corollary \ref{col:spairSyzygy}, which allows us to eliminate all
S-pairs in signature T. Call this the \emph{signature
  criterion}. Since \galg{} considers S-pairs in ascending order of
signature, we observe that the only syzygy signatures that the
signature criterion might not eliminate are those that come from an
element of a minimal \grobner{} basis of the module of syzygies.

\begin{corollary}
\label{col:spairSyzygy}
Let $u\in R^n$ such that all $v\in R^n$ with $\sig v<\sig u$ reduce to
zero. Suppose there exists a syzygy $h\in R^n$ whose signature divides
the signature of $u$. Then $u$ regular reduces to zero.
\end{corollary}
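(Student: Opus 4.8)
The plan is to deduce the statement from Lemma~\ref{lem:p3} by manufacturing, out of $h$, a syzygy whose signature coincides exactly with that of the regular reduced form of $u$.

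First I would scale $h$. Since $\sig h$ divides $\sig u$, there is a monomial $a\in R$ with $a\,\sig h\simeq\sig u$, and because the two term orders are compatible we have $\sig{ah}=\sig{a\mmor h}=a\,\sig{\mmor h}=a\,\sig h\simeq\sig u$. Choosing the appropriate nonzero field element $c$ and setting $b\defeq cah\in R^n$ gives $\sig b=\sig u$ with the same coefficient, and $b$ remains a syzygy since $\proj b=ca\,\proj h=0$. Two observations then prepare the ground. (i) Any syzygy is regular reduced: taking quotient $q=0$, the four conditions of \sigdiv{} hold vacuously because $\proj b=0$ has no terms, so no reduction step is ever applicable to $b$. (ii) If $r$ denotes the result of regular reducing $u$, then $r$ is regular reduced and $\sig r=\sig u$: every regular reduction step uses a reducer of strictly smaller signature than the element being reduced (singular steps being excluded), so the $\leq$-maximal term of $\mmor{\cdot}$ is never cancelled and the signature --- coefficient included --- is carried unchanged through the whole reduction.

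Now I would invoke Lemma~\ref{lem:p3} with $L\defeq\sig u$. The hypothesis of the corollary is precisely that every $v\in R^n$ with $\sig v<\sig u$ reduces to zero, which is what the lemma requires of $L$; moreover $\sig r=\sig b=L\leq L$, and $r$ and $b$ are both regular reduced. The lemma therefore yields $\proj r=\proj b=0$, so the regular reduced form of $u$ has zero image, i.e.\ $u$ regular reduces to zero. The points that need care are the arithmetic of the scaling step --- getting $\sig b=\sig r$ on the nose rather than merely up to $\simeq$, since Lemma~\ref{lem:p3} is stated with an equality of signatures --- and the verification in (ii) that regular reduction neither raises nor lowers the signature, which is exactly where the exclusion of singular steps is used.
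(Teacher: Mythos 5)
Your proof is correct and follows essentially the same route as the paper: scale $h$ to a syzygy with signature exactly $\sig u$, note that a syzygy is automatically regular reduced, and apply Lemma~\ref{lem:p3} with $L=\sig u$ to both it and the regular reduced form of $u$. The paper's version is terser (its ``monomial'' $a$ already carries the field coefficient, so no separate scaling by $c$ is needed), but the extra details you supply --- that syzygies are regular reduced and that regular reduction preserves the signature on the nose --- are exactly the right ones to check.
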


Table \ref{tab:sigSPair} gives information about how many S-pairs are
eliminated due to each criterion. The criteria are checked in the
given order. \AppRef{app:spairElim} has more details.

\subsection{Rewriting and the Singular Criterion}
\label{sec:termChoice}

We present a technique that makes reductions easier to perform and
that provides a criterion for eliminating all S-pairs of singular
signature.

If there are two or more S-pairs in the same signature $T$, then we
only have to regular reduce one of them. Since reduction proceeds by
decreasing the lead term, we can heuristically speed up reduction by
choosing an S-pair $p$ whose lead term $\hdp p$ is minimal. Both GVW
and AP make this suggestion. In F5, the situation where one S-pair is
preferred over another is called a \emph{rewriting criterion}.

Selecting the minimum lead term S-pair would require us to calculate
the lead term of each S-pair. If
$\sig{\spair\alpha\beta}=\sig{t\alpha}$, then we get the same result
from regular reducing $\spair\alpha\beta$ as for regular reducing
$t\alpha$. So we should select the term from $M$ with minimal lead
term, where $M\defeq\setBuilder{t\mbasis i}{t\text{ is a monomial and
  }\sig{t\mbasis i}=T}$. Let $t\mbasis i$ be an element of $M$ with
minimal lead term. Note that $t\mbasis i$ might not come from any
S-pair in signature $T$. If $t\mbasis i$ is not regular top reducible,
then we know that $T$ is a singular signature, so no regular reduction
need take place. We call this criterion for eliminating S-pairs the
\emph{singular criterion}. Corollary \ref{cor:p4} implies that the
singular criterion eliminates an S-pair if and only if it is of
singular signature.

AP independently came up with the idea of the singular criterion, and
they additionally remark that if $t\mbasis i$ does not come from an
S-pair in signature $T$, and if $t\mbasis i$ has a strictly lower lead
term than any element of $M$ that does come from an S-pair in
signature $T$, then the singular criterion will apply. An implication
of that is that if $\sig{\spair\alpha\beta}=\sig{q\alpha}$ and there
exists a $w\mbasis k$ such that $\sig{w\mbasis k}=\sig{q\alpha}$ and
$\hdp{w\mbasis k}<\hdp{q\alpha}$ then we can eliminate
$\spair\alpha\beta$ right away.

\begin{corollary}
\label{cor:p4}
Let $p$ be an S-pair and let $p\p$ be the result of regular reducing
$p$. Let $M$ be the finite set
\[
M\defeq\setBuilder{a\mbasis i}{a\text{ is a monomial and }\sig{a\mbasis i}=\sig p}.
\]
Then all elements of $M$ regular reduce to $p\p$. Also, $p\p$ is
singular top reducible if and only if some element of $M$ is regular
top reduced.
\end{corollary}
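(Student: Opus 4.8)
The plan is to deduce both assertions from Lemma~\ref{lem:p3}. Write $T:=\sig p$ and assume that every $v\in R^n$ with $\sig v<T$ \sigrede{} to zero; this is the hypothesis of Lemma~\ref{lem:p3}, and it holds whenever \galg{} selects an S-pair, since then all S-pairs of strictly smaller signature have already been regular reduced to a syzygy or to something singular top reducible (cf.\ Theorem~\ref{thm:spairs}). I will apply Lemma~\ref{lem:p3} with $L:=T$ throughout. Two preliminary facts: first, for a monomial $a$ we have $\sig{a\mbasis i}=a\,\sig{\mbasis i}$, so for each index $i$ there is at most one monomial $a$ with $\sig{a\mbasis i}=T$ (whence $\card M\leq n$), and, writing $p=t_1\mbasis i-t_2\mbasis j$ as in the definition of an S-pair, regularity forces $\sig{t_1\mbasis i}\not\simeq\sig{t_2\mbasis j}$, so the larger of these two equals $\sig p=T$ and $M\neq\emptyset$; second, a regular reduction step $u\mapsto u-a\mbasis k$ has $\sig{a\mbasis k}\leq\sig u$ and $\sig{a\mbasis k}\not\simeq\sig u$, hence $\sig{a\mbasis k}<\sig u$ and $\sig{u-a\mbasis k}=\sig u$, so regular reduction preserves the signature term exactly --- in particular $\sig{p\p}=T$, and every $b\in M$ regular reduces to some $r$ with $\sig r=\sig b=T$.

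For the first assertion, fix $b\in M$ and let $r$ be the result of regular reducing it. Then $r$ and $p\p$ are both regular reduced with $\sig r=\sig{p\p}=T\leq L$, so Lemma~\ref{lem:p3} gives $\proj r=\proj{p\p}$; that is, $b$ regular reduces to $p\p$ in the only sense that matters here, since whether the remainder of an S-pair is a syzygy or is singular top reducible depends only on its image and on its signature, which is $T$ for every element of $M$.

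For the second assertion I prove both implications, using that $p\p$, being fully regular reduced, is regular top reduced, and that $\proj b=a\,g_i\neq0$ --- hence $\hdp b\neq0$ --- for every $b=a\mbasis i\in M$. If some $b=a\mbasis i\in M$ is regular top reduced, then $b$ and $p\p$ are regular top reduced with $\sig b=\sig{p\p}=T\leq L$, so Lemma~\ref{lem:p3} gives $\hdp b=\hdp{p\p}$; since also $\sig b=T\simeq\sig{p\p}$, the pair $(a,i)$ is a singular reduction step on the lead term of $\proj{p\p}$, so $p\p$ is singular top reducible. Conversely, if $p\p$ is singular top reducible there are a monomial $a$ and an index $i$ with $\hdp{a\mbasis i}=\hdp{p\p}$ and $\sig{a\mbasis i}\simeq T$; scaling $a$ so that $\sig{a\mbasis i}=T$, set $b\defeq a\mbasis i\in M$. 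I claim $b$ is regular top reduced: otherwise some $c\mbasis k$ would reduce $\hdp b$ with $\sig{c\mbasis k}<\sig b=T$, and after rescaling $c$ this would give a regular reduction step on the lead term of $\proj{p\p}$ with reducer signature strictly below $T=\sig{p\p}$, contradicting that $p\p$ is regular top reduced. (If $\proj{p\p}=0$ both sides are false: the zero lead term of $p\p$ is not reducible, and no $b\in M$ is regular top reduced, as Lemma~\ref{lem:p3} would then give $\hdp b=\hdp{p\p}=0$, impossible since $\hdp b\neq0$.)

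I expect essentially everything here to be routine once Lemma~\ref{lem:p3} is available. The two places needing care are verifying that regular reduction keeps the \emph{coefficient} of the signature fixed, so that the members of $M$ and $p\p$ really share the single signature term $T$ rather than merely $\simeq$-equivalent signatures, and the harmless scalar rescalings used to pass between the relations $\simeq$ and $=$ for signatures and lead terms. The one conceptual ingredient is the applicability of Lemma~\ref{lem:p3} itself, which is exactly the invariant that Theorem~\ref{thm:spairs} maintains while \galg{} runs.
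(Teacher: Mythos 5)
Your proposal is correct and follows essentially the same route as the paper's own proof: both parts are deduced from Lemma~\ref{lem:p3} applied to elements of $M$ and $p\p$ sharing the signature $\sig p$, with the converse of the second assertion handled by rescaling a singular top reducer of $p\p$ into an element of $M$ and observing it must be regular top reduced because $p\p$ is. Your additional remarks (the standing hypothesis needed for Lemma~\ref{lem:p3}, the exact preservation of the signature term under regular reduction, and the degenerate case $\proj{p\p}=0$) are elaborations the paper leaves implicit, not a different argument.
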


\subsection{Koszul Syzygies for S-Pair Elimination}

The \emph{Koszul syzygy} between $\mbasis i$ and $\mbasis j$ is $\koz
i j\defeq g_j\mbasis i-g_i\mbasis j$. If $\sig{g_j\mbasis
  i}\not\simeq\sig{g_i\mbasis j}$ then the Koszul syzygy is
\emph{regular}. By ``Koszul syzygy'' we always mean ``regular
Koszul syzygy''.

The signature of $\koz i j$ is $\max\left(\hd{g_i}\sig{\mbasis
  j},\hd{g_j}\sig{\mbasis i}\right)$.  We can use the Koszul syzygy to
eliminate all S-pairs in that signature. We call this S-pair
elimination criterion the \emph{Koszul criterion}. The idea goes back
to at least F5.

In the classic Buchberger algorithm, we can eliminate an S-pair
between two polynomials if their lead terms are relatively prime. In
\galg{} this is a special case of the Koszul criterion, since then
$\sig{\koz i j}=\sig{\spair i j}$. Even so, we call this the
\emph{relatively prime criterion} and do not consider such S-pairs to
be eliminated by the Koszul criterion.

Many S-pairs that could be eliminated by the Koszul criterion are
already eliminated by the signature criterion. We consider such
S-pairs to be eliminated by the signature criterion rather than the
Koszul criterion. The signature criterion eliminates all syzygy
signatures that are divisible by some other syzygy signature. So the
Koszul criterion eliminates those signatures that are the signature of
a Koszul syzygy, that are minimal among all syzygy signatures and that
are not eliminated by the relatively prime criterion.

A straight forward way to make use of Koszul syzygies would be to
insert all Koszul syzygy signatures into the set of known syzygies and
let the signature criterion also work with Koszul syzygies. Since
Koszul syzygy signatures are rarely minimal syzygy signatures, this
can greatly increase the size of the set of known syzygies. We can
remove the Koszul signatures that are non-minimal among the syzygy
signatures that we know at any given point, but many of the signatures
that are left after that will still not be minimal among the set of
all syzygy signatures. So adding the Koszul signatures to the set of
known syzygy signatures can cause significant overhead in space for
storing all these signatures, and significant overhead in time for
checking all these signatures when using the signature criterion.

An implication of the discussion so far is that the Koszul criterion
only eliminates an S-pair that would not already have been eliminated
by the signature criterion when the Koszul signature is equal to the
S-pair signature. This implies that we can make full use of the Koszul
criterion by maintaining a priority queue of Koszul signatures. The
priority queue allows us to determine the minimum Koszul signature in
the queue. S-pairs are processed in increasing order of signature, so
if we have gotten to an S-pair with signature $T$ and the minimum
Koszul signature $L$ is less than $T$, then we can throw $L$ away
since it will never equal any future S-pair signatures. If $T=L$ then
we can eliminate the S-pair using the Koszul criterion. If $T<L$ then
the Koszul criterion cannot eliminate the S-pair.

The priority queue approach does not alleviate the memory overhead of
Koszul syzygies. We still have to construct all the Koszul
signatures, which by itself can take a lot of time. One observation
that helps is that if the S-pair $\spair i j$ is eliminated by the
signature criterion using a known syzygy signature $T$, then
$T|\sig{\spair i j}|\sig{\koz i j}$. So if $\spair i j$ is eliminated
by the signature criterion, then there is no reason to construct the
Koszul signature $\sig{\koz i j}$.

Another observation is that $\sig{\spair i j}\leq\sig{\koz i
  j}$. S-pairs are processed in order of increasing signature, so if
$\koz i j$ ends up eliminating an S-pair, then that S-pair must have a
higher signature than $\spair i j$ does. So we do not need to
construct $\koz i j$ until we process the S-pair $\spair i j$. If the
S-pair is eliminated or reduces to zero, then we do not have to
construct the Koszul syzygy at all. These two observations reduce the
overhead in time and space of the Koszul criterion to almost nothing.

We can now characterize how many S-pairs \galg{} reduces when using
these S-pair elimination techniques.

\begin{theorem}
\label{thm:characReductions}
Let $\extendBasis n$ be a minimal signature \grobner{} basis. Let $H$
be the initial submodule of the module of syzygies of
$\origBasis$. Then \galg{} reduces one S-pair for each element of
$\extendBasis n\setminus\origBasis$. \galg{} also reduces one S-pair
for each minimal generator of $H$ that is not the signature of any
Koszul syzygy among the elements of $\extendBasis n$. \galg{} reduces
no more S-pairs than that.
\end{theorem}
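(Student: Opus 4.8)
The plan is to account for every S-pair that \galg{} actually regular reduces (rather than eliminates) by matching it to one of the two claimed sources, and then to show the count is tight. I would begin by fixing a minimal signature \grobner{} basis $\extendBasis n$ and recalling the three possible outcomes when \galg{} regular reduces an S-pair in signature $T$: the remainder $p\p$ is a syzygy (so $T$ is a syzygy signature), $p\p$ is singular top reducible (so $T$ is a singular signature), or $p\p$ is added to the basis (so $T$ is a basis signature). By Corollary \ref{cor:p4}, the singular criterion eliminates an S-pair precisely when its signature is singular, so after applying the singular criterion \galg{} never reduces an S-pair of singular signature. Thus every reduction \galg{} performs is in a signature $T$ that is either a basis signature or a (non-eliminated) syzygy signature, and since only one S-pair per signature is reduced (Lemma \ref{lem:p3}), it suffices to count these two kinds of signatures.

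For the basis signatures: I would argue that the signatures of the elements of $\extendBasis n \setminus \origBasis$ are pairwise distinct and are exactly the basis signatures in which a reduction occurs. Distinctness follows because $\extendBasis n$ is \emph{minimal} --- two basis elements with $\simeq$-equal signatures would make one regular reducible by the other (or singular top reducible), contradicting that \galg{} only adds $p\p$ when it is not singular top reducible, and contradicting minimality of the final basis. Conversely, each $g_i \in \extendBasis n \setminus \origBasis$ was literally produced as the remainder $p\p$ of exactly one reduced S-pair, so it contributes exactly one reduction. This gives the first "one S-pair for each element of $\extendBasis n \setminus \origBasis$" clause.

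For the syzygy signatures: here I would use the S-pair elimination discussion preceding the theorem together with Corollary \ref{col:spairSyzygy} and Theorem \ref{thm:syzygies}. The signature criterion eliminates any S-pair whose signature is a syzygy signature divisible by a \emph{strictly smaller} known syzygy signature; since \galg{} processes S-pairs in increasing signature order and records every syzygy signature it discovers, the only syzygy signatures that survive the signature criterion are the minimal generators of $H$. Among those, the Koszul criterion (maintained via the priority queue as described) eliminates exactly the ones that equal the signature of a Koszul syzygy $\koz i j$ among elements of $\extendBasis n$ --- and, crucially, by Theorem \ref{thm:syzygies} every minimal generator of $H$ that is \emph{not} such a Koszul signature does arise as $\sig{p\p}$ for some S-pair $p$ reducing to a syzygy, forcing exactly one reduction for it. So the syzygy reductions \galg{} performs are in bijection with the minimal generators of $H$ that are not Koszul signatures among $\extendBasis n$, giving the second clause; and since we have now exhausted the basis and surviving-syzygy signatures, no further reductions occur, giving the "no more than that" clause.

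The main obstacle I expect is pinning down the exact correspondence in the syzygy case, specifically showing that a minimal generator of $H$ that is not a Koszul signature genuinely survives \emph{both} the signature and Koszul criteria and is therefore reduced exactly once --- this requires being careful that the Koszul priority-queue mechanism only removes a syzygy signature $L$ when it is strictly less than the current S-pair signature (never equal to a would-be basis signature), and that Theorem \ref{thm:syzygies} produces an S-pair whose reduced signature \emph{equals} the minimal generator rather than merely dividing it. A secondary subtlety is ruling out that the relatively-prime criterion silently eliminates an S-pair in a signature we need; but such signatures coincide with Koszul signatures ($\sig{\koz i j} = \sig{\spair i j}$ in the relatively prime case), so they fall under the Koszul exclusion already accounted for.
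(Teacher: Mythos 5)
Your proposal is correct and follows essentially the same route as the paper, which gives no separate proof of Theorem \ref{thm:characReductions} but presents it as a summary of exactly the discussion you reconstruct: the singular criterion (Corollary \ref{cor:p4}) and the one-reduction-per-signature fact (Lemma \ref{lem:p3}) reduce the count to basis signatures plus surviving syzygy signatures, the former in bijection with $\extendBasis n\setminus\origBasis$ and the latter, via Corollary \ref{col:spairSyzygy} and Theorem \ref{thm:syzygies}, with the minimal generators of $H$ that are not Koszul signatures. The subtleties you flag at the end --- that minimality of a generator of $H$ upgrades ``divides'' to ``equals'' in Theorem \ref{thm:syzygies}, and that the relatively prime criterion is subsumed under the Koszul exclusion --- are precisely the points the paper's own discussion relies on.
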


Table \ref{tab:sigSPair} shows that the relatively prime and Koszul
criteria eliminate only a small proportion of all S-pairs. However,
for several examples, that is still a significant proportion of the
amount of S-pairs that are reduced, so these criteria can have a
significant impact on the final number of reductions.

\subsection{Compare Ratios Instead of Signatures}
\label{sec:ratio}

\galg{} can spend a lot of time comparing signatures. We present a
technique for replacing many of these signature comparisons with
comparison of just a single integer.

We start with the example of computing the signature of an S-pair
$p\defeq\spair i j$. Let $A\defeq\sig{\mbasis i}$,
$a\defeq\hdp{\mbasis i}$, $B\defeq\sig{\mbasis j}$ and
$b\defeq\hdp{\mbasis j}$. Then $\sig p$ is the larger one of
$E\defeq\frac{b}{\gcd(a,b)}A$ and $F\defeq\frac{a}{\gcd(a,b)}B$, so a
straight forward way of getting $\sig p$ is to compute $E$ and $F$ and
then compare the two to see which is larger. Computing $\sig p$ in
this way can take a lot of time when there are many basis
elements. For a faster solution, observe that (we allow negative
exponents)
\[
\textstyle{
E<F \quad\biimp\quad
bA<aB \quad\biimp\quad
\frac{A}{a}<\frac{B}{b}.}
\]
So if we store the ratio of signature to lead term (the \emph{sig-lead
  ratio}) with each basis element, then we can determine which of $E$
and $F$ is larger by comparing the stored ratios instead of having to
compute both $E$ and $F$. The next step is then to compute only the
one out of $E$ and $F$ that is the signature. The signature of the
Koszul syzygy $\koz i j$ is the larger one of $bA$ and $aB$, so the
same technique works there.

We speed up the comparison of sig-lead ratios by associating an
integer $\ratid i$ to each basis element $\mbasis i$ such that $\ratid
i < \ratid j\biimp\frac{A}{a} < \frac{B}{b}$. In this way sig-lead
ratio comparisons can be done as an integer comparison, which is much
faster. To support fast insertions of new basis elements we use a
simple approach based on spacing the integers far apart to begin with
and just rebuilding the whole datastructure if a new basis element
would need to have an integer between $k$ and $k+1$ for some
$k$. There exists a faster approach \cite{maintainingOrder} for
updating the integers $\ratid i$, but we have not needed it.

Ratio comparisons also occur in \sigred{}. Suppose we want to
\sigrede{} the lead monomial of $u$ by a basis element $e_i$. Let
$A\defeq\sig u$, $a=\hdp u$, $B=\sig{\mbasis i}$ and $b=\hdp{\mbasis
  i}$. Suppose that $b|a$. Then the \sigred{} can be performed if
$\frac{a}{b}B\leq A$, which is equivalent to
$\frac{B}{b}\leq\frac{A}{a}$. Unfortunately, the sig-lead ratio
$\frac{A}{a}$ has to be computed for each term being reduced. In our
ex\-pe\-ri\-ments it has been slower to compute an appropriate integer
to associate to the ratio than to just use $\frac{A}{a}$ directly for
comparisons. This is still faster than deciding $\frac{a}{b}B<A$
directly since that expression involves a division and a
multiplication for every comparison.

Ratio comparisons are also relevant to finding the best module term to
reduce as described in Section \ref{sec:termChoice}. We need to find
the element of $M$ from Corollary \ref{cor:p4} with minimum lead
term. Each basis element $\alpha$ whose signature $A\defeq\sig\alpha$
divides $T$ contributes the element $A\p\defeq\frac{T}{A}\alpha$ to
$M$. The lead term is $\hdp{A\p}=\frac TAa$ where
$a\defeq\hdp\alpha$. If $\beta$ is another basis element that also
contributes an element $B\p$ to $M$, then we need to perform the lead
term comparison
\[
\textstyle{
\frac TAa<\frac TBb \quad\biimp\quad
\frac aA < \frac bB \quad\biimp\quad
\frac Aa > \frac Bb,
}
\]
which can be done as a sig-lead ratio comparison.

The sig-lead ratio comparison technique also applies to what we call
\emph{base divisors} --- see Section \ref{sec:basediv}.

\subsection{Base Divisors for S-Pair Elimination}
\label{sec:basediv}

When a new element is added to the basis, we construct the S-pairs
between the new element and all the previous elements. In many cases
almost all of these S-pairs can be eliminated right away by the
signature criterion, so if there are many S-pairs then just
constructing the signature for each S-pair and then checking the
signature criterion on that signature can take up a lot of time.

We present a new S-pair elimination criterion that we call the
\emph{base divisor criterion}. The new criterion is strictly weaker
than the signature criterion, but it is faster to check.

We check the base divisor criterion before the signature criterion, so
all the S-pairs that the base divisor criterion eliminates then do not
need to be checked by the slower signature criterion. Table
\ref{tab:sigSPair} shows that the base divisor criterion eliminates a
substantial amount of S-pairs. Table \ref{tab:sbTimes} shows the drop
in performance if we do not check the base divisor criterion before
using the signature criterion. For yang1 the base divisors are a 35\%
performance improvement, and they eliminate 71\% of the S-pairs.

Let $\beta$ be a new basis element that we have just added to the
basis. We consider the S-pairs $\spair\beta\gamma$ between $\beta$ and
each other basis element $\gamma$. We aim to eliminate
$\spair\beta\gamma$ without computing the signature
$\sig{\spair\beta\gamma}$.

The idea here is to consider a fixed previous basis element $\alpha$
that has certain properties. We call $\alpha$ a \emph{base
  divisor}. We would like it to be true that
$\sig{\spair\alpha\gamma}|\sig{\spair\beta\gamma}$, since then we can
eliminate $\spair\beta\gamma$ when $\spair\alpha\gamma$ has a syzygy
signature. We use a triangle of bits, one bit for each S-pair, so we
can see in constant time if we know $\spair\alpha\gamma$ to have a
syzygy signature.  The criterion has two parts depending on whether
$\gamma$ has a high or low sig-lead ratio (see Section
\ref{sec:ratio}).

\subsubsection*{High Ratio Base Divisors}

A basis element $\alpha$ can be used as a \emph{high ratio base
  divisor} when $\hdp\alpha|\hdp\beta$. A high ratio base divisor
$\alpha$ can eliminate $\spair\beta\gamma$ when $\gamma$ has higher
sig-lead ratio than both of $\alpha$ and $\beta$. Theorem
\ref{thm:highRatio} spells out the precise criterion.

\begin{theorem}
\label{thm:highRatio}
Let $\alpha,\beta,\gamma\in R^n$ such that $\hdp\alpha|\hdp\beta$ and
$\frac{\sig\gamma}{\hdp\gamma}> \frac{\sig\alpha}{\hdp\alpha},
\frac{\sig\beta}{\hdp\beta}$. Then
$\sig{\spair\alpha\gamma}|\sig{\spair\beta\gamma}$.
\end{theorem}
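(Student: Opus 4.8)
The plan is to compute both S-pair signatures explicitly and then reduce the statement to an elementary divisibility of least common multiples. Write $a\defeq\hdp\alpha$, $b\defeq\hdp\beta$, $c\defeq\hdp\gamma$ and $A\defeq\sig\alpha$, $B\defeq\sig\beta$, $C\defeq\sig\gamma$; the hypotheses are $a\mid b$ (as monomials of $R$) together with $\frac{A}{a}<\frac{C}{c}$ and $\frac{B}{b}<\frac{C}{c}$. Passing through $\mmorSym$, the S-pair $\spair\alpha\gamma\in R^n$ has image $\frac{c}{d}\mmor\alpha-\frac{a}{d}\mmor\gamma$ in $R^m$, where $d\defeq\gcd(a,c)$, and as in Section~\ref{sec:ratio} its signature is the $\le$-maximal term of this difference, whose only candidates are $\frac{c}{d}A$ and $\frac{a}{d}C$, compared via $\frac{c}{d}A<\frac{a}{d}C\biimp cA<aC\biimp\frac{A}{a}<\frac{C}{c}$.

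First I would pin down $\sig{\spair\alpha\gamma}$. From $\frac{A}{a}<\frac{C}{c}$ we get $\frac{c}{d}A<\frac{a}{d}C$ strictly, so these two terms have distinct monomials. Since $A=\sig{\mmor\alpha}$ is the $\le$-maximal term of $\mmor\alpha$ and multiplication by the monomial $\frac{c}{d}$ is order-preserving, every term of $\frac{c}{d}\mmor\alpha$ is $\le\frac{c}{d}A<\frac{a}{d}C$; hence the term $\frac{a}{d}C$ of $\frac{a}{d}\mmor\gamma$ cannot be cancelled and is the (unique up to $\simeq$) $\le$-maximal term of the difference. Therefore the S-pair is regular and $\sig{\spair\alpha\gamma}=\frac{a}{d}C=\frac{a}{\gcd(a,c)}C=\frac{\lcm(a,c)}{c}C$, using $\gcd(a,c)\lcm(a,c)\simeq ac$. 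The identical argument, with $\beta$ in place of $\alpha$ and using $\frac{B}{b}<\frac{C}{c}$, yields $\sig{\spair\beta\gamma}=\frac{\lcm(b,c)}{c}C$.

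It remains to show $\frac{\lcm(a,c)}{c}C\mid\frac{\lcm(b,c)}{c}C$, for which it suffices that $\lcm(a,c)\mid\lcm(b,c)$ as monomials. This is immediate: $a\mid b\mid\lcm(b,c)$ and $c\mid\lcm(b,c)$, so the monomial $\lcm(b,c)$ is a common multiple of $a$ and $c$ and hence a multiple of $\lcm(a,c)$. Combining the last two paragraphs gives $\sig{\spair\alpha\gamma}\mid\sig{\spair\beta\gamma}$, as desired.

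I expect the only step needing real care to be the identification of $\sig{\spair\alpha\gamma}$ with the larger candidate $\frac{a}{d}C$: one must use the \emph{strictness} of the ratio inequality to guarantee that $\frac{c}{d}A$ and $\frac{a}{d}C$ have genuinely different monomials, so that no cancellation of leading terms occurs (equivalently, that the S-pair is regular). Beyond that the argument is bookkeeping with monomial exponents, or equivalently with the $\lcm$/$\gcd$ calculus; and I would stay mindful throughout that signatures and lead terms may carry nonzero field coefficients, so all divisibility, $\lcm$ and $\gcd$ assertions are to be read on the underlying monomials, which is why identities such as $\gcd(a,c)\lcm(a,c)\simeq ac$ are stated only up to $\simeq$.
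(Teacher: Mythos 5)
Your proof is correct and follows essentially the same route as the paper's: use the strict sig-lead-ratio inequalities to identify $\sig{\spair\alpha\gamma}\simeq\frac{\hdp\alpha}{\gcd(\hdp\alpha,\hdp\gamma)}\sig\gamma$ and $\sig{\spair\beta\gamma}\simeq\frac{\hdp\beta}{\gcd(\hdp\beta,\hdp\gamma)}\sig\gamma$, then reduce to a purely monomial divisibility. The only (cosmetic) difference is the last step, where the paper verifies $\min(b_i,c_i)-\min(a_i,c_i)\leq b_i-a_i$ by a coordinatewise case analysis while you rewrite the quotients as $\lcm(a,c)/c$ and $\lcm(b,c)/c$ and invoke $\lcm(a,c)\mid\lcm(b,c)$ -- a slightly cleaner finish; your explicit justification that the larger candidate term cannot cancel is also a welcome bit of extra care that the paper leaves implicit.
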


We can use a kd-tree (see Section \ref{sec:monds}) on the initial
terms of the basis elements to quickly determine all the possible base
divisors. The base divisor that will eliminate the most S-pairs is the
one with the highest sig-lead ratio, so we use that one. Sometimes
there is no high ratio base divisor.

\subsubsection*{Low Ratio Base Divisors}

A basis element $\alpha$ can be used as a \emph{low ratio base
  divisor} when $\sig\alpha|\sig\beta$. A low ratio base divisor
$\alpha$ can in some cases eliminate $\spair\beta\gamma$ when $\gamma$
has lower sig-lead ratio than both of $\alpha$ and $\beta$. Theorem
\ref{thm:lowRatio} spells out the precise criterion.

\begin{theorem}
\label{thm:lowRatio}
Let $\alpha,\beta,\gamma\in R^n$ such that $\sig\alpha|\sig\beta$ and
$ \frac{\sig\gamma}{\hdp\gamma}< \frac{\sig\alpha}{\hdp\alpha},
\frac{\sig\beta}{\hdp\beta}$. Let
$x^p\defeq\frac{\hdp\alpha\sig\beta}{\sig\alpha}$,
$x^a\defeq\hdp\alpha$ and $x^b\defeq\hdp\beta$. Define $v$ by
$v_i\defeq\infty$ for $b_i\leq p_i$ and $v_i\defeq\max(p_i,a_i)$
otherwise.  Then $\sig{\spair\alpha\gamma}|\sig{\spair\beta\gamma}$ if
and only if $\hdp\gamma|x^v$.
\end{theorem}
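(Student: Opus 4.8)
The plan is to compute $\sig{\spair\alpha\gamma}$ and $\sig{\spair\beta\gamma}$ explicitly in terms of the relevant data and then compare them monomial-exponent by monomial-exponent. Write $x^c \defeq \hdp\gamma$ and $C \defeq \sig\gamma$. By the formula for S-pair signatures derived in Section \ref{sec:ratio}, $\sig{\spair\alpha\gamma}$ is the $\leq$-larger of $\frac{x^c}{\gcd(x^a,x^c)}\sig\alpha$ and $\frac{x^a}{\gcd(x^a,x^c)}C$, and similarly $\sig{\spair\beta\gamma}$ is the larger of $\frac{x^c}{\gcd(x^b,x^c)}\sig\beta$ and $\frac{x^b}{\gcd(x^b,x^c)}C$. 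The hypothesis $\frac{C}{x^c} < \frac{\sig\alpha}{x^a}, \frac{\sig\beta}{x^b}$ says, after clearing denominators exactly as in Section \ref{sec:ratio}, that in each of the two S-pairs the ``$\alpha$-side'' (resp. ``$\beta$-side'') term wins; that is, $\sig{\spair\alpha\gamma} \simeq \frac{x^c}{\gcd(x^a,x^c)}\sig\alpha = \lcm(x^a,x^c)\cdot\frac{\sig\alpha}{x^a}$ and $\sig{\spair\beta\gamma} \simeq \lcm(x^b,x^c)\cdot\frac{\sig\beta}{x^b}$. So up to a ground-field scalar the divisibility $\sig{\spair\alpha\gamma}\mid\sig{\spair\beta\gamma}$ is exactly $\lcm(x^a,x^c)\cdot\frac{\sig\alpha}{x^a} \ \big|\ \lcm(x^b,x^c)\cdot\frac{\sig\beta}{x^b}$, a purely combinatorial statement about exponent vectors (scalars don't affect divisibility of terms).

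Next I would rewrite that divisibility coordinatewise. Using $\sig\alpha\mid\sig\beta$, set $x^q \defeq \frac{\sig\beta}{\sig\alpha}$, so $\frac{\sig\beta}{x^b} = x^q\cdot\frac{\sig\alpha}{x^a}\cdot\frac{x^a}{x^b}$; the factor $\frac{\sig\alpha}{x^a}$ is common to both sides and can be cancelled (it may have negative exponents, but divisibility of terms is preserved under multiplying both sides by any fixed monomial, positive or negative). What remains to prove is $\lcm(x^a,x^c) \mid x^q\cdot x^a \cdot \lcm(x^b,x^c)/x^b$, i.e. coordinatewise $\max(a_i,c_i) \le q_i + a_i + \max(b_i,c_i) - b_i$ for all $i$. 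Now observe $x^p = x^a\cdot x^q$ by the definition $x^p \defeq \hdp\alpha\sig\beta/\sig\alpha$, so $p_i = a_i + q_i$, and the inequality becomes $\max(a_i,c_i) \le p_i + \max(b_i,c_i) - b_i$, equivalently $\max(a_i,c_i) - p_i \le \max(b_i - b_i, c_i - b_i) = \max(0, c_i - b_i)$.

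The last step is the coordinatewise case split that produces the vector $v$. Fix a coordinate $i$. If $c_i - b_i \le 0$, i.e. $b_i \ge c_i$, the right side is $0$, so the condition is $\max(a_i,c_i)\le p_i$; note that when $b_i \le p_i$ we have $v_i = \infty$ and the constraint $c_i \le v_i$ is automatic, and indeed one checks $b_i\ge c_i$ together with the requirement forces $c_i\le p_i$ and $a_i\le p_i$ is needed — here I will have to be a little careful separating the subcase $b_i\le p_i$ (where $v_i=\infty$ and nothing is required of $c_i$, which is consistent because $c_i\le b_i\le p_i$ makes $\max(a_i,c_i)\le p_i \Leftrightarrow a_i\le p_i$, and $a_i\le p_i$ follows from $p_i=a_i+q_i\ge a_i$) from the subcase $b_i>p_i$. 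If instead $c_i - b_i > 0$, i.e. $c_i > b_i$, then $v_i = \max(p_i,a_i) = p_i$ since $p_i\ge a_i$, and the condition $\max(a_i,c_i) - p_i \le c_i - b_i$ rearranges to $\max(a_i,c_i) \le p_i + c_i - b_i$; when $c_i\ge a_i$ this is $c_i\le v_i + c_i - b_i$, i.e. $b_i\le v_i = p_i$, which is exactly the condition encoded by "$v_i\defeq\max(p_i,a_i)$ otherwise" together with $c_i\mid$-style bookkeeping. Collecting the cases, the full system of coordinatewise inequalities is equivalent to $c_i \le v_i$ for all $i$, i.e. $x^c \mid x^v$, i.e. $\hdp\gamma\mid x^v$. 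The main obstacle I anticipate is bookkeeping the case analysis cleanly — in particular getting the $v_i=\infty$ branch to line up precisely with "$b_i\le p_i$" and verifying no spurious constraint on $c_i$ survives there — rather than any deep difficulty; the signature-formula reductions in the first paragraph are routine given Section \ref{sec:ratio}.
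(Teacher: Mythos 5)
Your proposal is correct and follows essentially the same route as the paper's proof: use the ratio hypotheses to identify which term of each S-pair carries the signature, cancel the common factor $\sig\alpha/\hdp\alpha$ to reduce the divisibility to a coordinatewise inequality, and verify by a case analysis that this inequality is equivalent to $c_i\le v_i$ for all $i$. The one slip is your assertion that $c_i>b_i$ forces $v_i=\max(p_i,a_i)$ --- the $\infty$ branch of $v$ is governed by $b_i\le p_i$, not by $c_i$ versus $b_i$ --- but the subcase this conflates ($c_i>b_i$ with $b_i\le p_i$) is precisely one where the inequality holds unconditionally and $v_i=\infty$ imposes no constraint, so the stated equivalence survives once the bookkeeping you flag is carried out.
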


To use Theorem \ref{thm:lowRatio} to eliminate $\spair\beta\gamma$, we
have to check if $\hdp\gamma|x^v$. Since $v$ does not depend on
$\gamma$, we need only compute it once. We need not check that
$\hdp\gamma|x^v$ if
$\frac{\sig\alpha}{\hdp\alpha}|\frac{\sig\beta}{\hdp\beta}$ since then
$v_i=\infty$ for every entry, but such $\alpha$ are rare.

In order for the sig-lead ratio requirement to be satisfied as often
as possible, we choose an $\alpha$ with maximum sig-lead ratio. A
kd-tree (see Section \ref{sec:monds}) on the basis signatures can
quickly find all the possible low ratio base divisors.

The numbers in Table \ref{tab:sbTimes} and Table \ref{tab:reducer}
are based on two base divisors per $\beta$, as that minimized the
total runtime.

\section{Data structures}
\label{sec:ds}

We present data structures that are useful both for classic Buchberger
algorithms and for signature algorithms.

\subsection{Ordering Terms During Reduction}
\label{sec:prio}

Both polynomial reduction and \sigred{} operate by having a
\emph{current polynomial} $f$ and adding monomial multiples $mg_i$ to
$f$. The basic operations for keeping track of the terms of $f$ are to
extract the maximal remaining term of $f$ and to add polynomials of
the form $mg_i$ to $f$. So we need a \emph{priority queue} on the
terms of $f$. Adding elements to a priority queue is called a
\emph{push} while removing the maximal element is called a
\emph{pop}. We investigate priority queues for keeping track of the
terms of $f$.

One solution is to store $f$ directly as a polynomial whose terms are
sorted. Yan pointed out that this can be very slow, as we can end up
looking at every term of $f$ for every insertion even when $mg_i$ has
only two terms. Yan introduced the \emph{geobucket priority queue}
which alleviates this issue \cite{geobucket}.

Heaps are a popular priority queue. Monagan and Pearce present
experiments that indicate that heaps are better than geobuckets for
polynomial multiplication and division \cite{heapOfPointers}.

The priority queue on terms of $f$ can contain terms $a$ and $b$ such
that $a\simeq b$. We would like to replace such $a$ and $b$ with $a+b$
so that we have fewer terms to order which is faster and takes less
memory. Fateman investigates the idea of using a hash table on the
terms in the priority queue to collect like terms
\cite{fateHash}. Hash tables do not order their entries, so it is
still necessary to keep a separate priority queue. We say that a
priority queue is \emph{hashed} if it uses a hash table in front of
the priority queue. Fateman reports that a hashed priority queue is
not the best option for monomial multiplication due to the overhead
imposed by the hash table.

Johnson had the idea that instead of keeping track of the terms of
$mg_i$, we could instead have a priority queue containing only the
maximal term of $mg_i$ \cite{origPtrHeap}. Once that term is
extracted, we would then insert the second-most-maximal term of $mg_i$
and so on. This requires annotating values in the priority queue with
information about $m$, about $g_i$ and about which term is the next
one. In this way the priority queue will contain fewer elements which
implies fewer comparisons and a smaller memory footprint. We say that
a priority queue using this idea is \emph{compressed}, since it
compresses information from all of $mg_i$ into a single entry.

When a compressed item in a priority queue is replaced by its
successor term, then we are replacing the maximal value in the
priority queue with a smaller value. Call this pop-push operation
\emph{replace-top}. Many priority queues can do a replace-top
operation faster than a pop followed by a push. Heaps are one
example. The \emph{tournament tree} is especially good at replace-top
operations.  For that reason we investigate using tournament trees in
polynomial division.

We have implemented a heap, a geobucket and a tournament tree for use
in polynomial division as well as hashed and compressed versions of
those data structures. We have made considerable effort to implement
these data structures in an efficient manner --- see
\appRef{app:prio}. We have focused on the general case, so we have not
used packed representations of the monomials.

Table \ref{tab:reducer} compares combinations of these techniques.
When two terms are compared, they might be determined to be equal. In
that case the two terms can be replaced by their sum, though handling
this imposes an overhead. In Table \ref{tab:reducer} the row ``dedup''
indicates whether duplicates are removed in this way. In our
experiment the hashed heap, geobucket and tournament tree have similar
performance and they are faster than the other options. Whether the
dedup and compression options are an advantage depends on the
particular configuration that they are applied to --- see Table
\ref{tab:reducer}. We do not list times for dedup in combination with
hashing, since hashing removes all duplicates.

\subsection{Monomial Ideal Data Structures}
\label{sec:monds}

Monomial ideal computations occur in several places in both signature
and classic \grobner{} basis algorithms. The most apparent example is
in reduction, where it is necessary to determine a basis element whose
lead term divides the term being reduced. This involves deciding the
membership problem on the monomial ideal that is generated by the lead
terms of the basis elements. We call this operation a \emph{divisor
  query}. We investigate data structures for divisor queries and
related operations. See \appRef{app:monds} for more details.

A straight forward divisor query algorithm is to check every monomial
in the data structure for whether it divides the query monomial. We
call this scheme a \emph{monomial list}.

Milowski proposed the \emph{monomial tree} data structure
\cite{monTree}. The monomial tree is a \emph{trie} on the exponent
vectors of the monomial ideal. Milowski shows that this data structure
can be significantly faster than a monomial list in many
cases. Unfortunately the monomial tree degenerates into a
higher-overhead monomial list if all the monomials have distinct
exponents of the first variable.

The toric \grobner{} basis implementation 4ti2 uses an unpublished
binary tree data structure due to Peter Malkin that we will call a
\emph{support tree}. Monomials are stored in the leaves. The leaf that
a monomial $a$ goes into depends on the support of the exponent vector
of $a$. Starting at the root of the tree, go to the right child if
$x_1|a$ and otherwise go left. Do the same thing at the next node for
$x_2$ and so on. A leaf is split into two smaller leaves if it
contains too many monomials. This data structure works well for toric
ideals as about half the exponents are zero. Unfortunately the support
tree degenerates into a higher-overhead monomial list if most of the
monomials have similar support.

We propose the use of \emph{kd-trees}. Kd-trees are used extensively
in computer graphics to keep track of sets of points. The exponent
vector of a monomial is also a point, so kd-trees can also be used as
a data structure for monomial ideals. Both the monomial tree and the
support tree can be described as special cases of kd-trees.

Kd-trees are binary trees. In our kd-tree implementation, the
monomials are in the leaves and each interior node contains a pure
power $x_i^k$. A monomial $a$ goes into the right subtree if $x_i^k|a$
and otherwise it goes into the left subtree. When looking for a
divisor of a monomial $a$, we then do not need to consult the right
subtree if $x_i^k$ does not divide $a$.

\emph{Divmasks} are a widely known technique to speed up divisor
queries. In the most general terms, a divmask involves a function $d$
from monomials to the set of vectors $\set{0,1}^k$ such that if $a|b$
then $d(a)\leq d(b)$. We call such a function a \emph{divmap}. The
idea is then that if $d(a)\nleq d(b)$, then we already know that $a$
does not divide $b$ so we do not have to check it. Furthermore,
checking if one 0-1 vector is dominated by another can be done very
quickly on a computer by letting each entry in the vector $d(a)$
correspond to a bit in a word $w(a)$ of memory. Then $d(a)\leq d(b)$
if and only if the bitwise-and of $w(a)$ with the bitwise negation of
$w(b)$ is zero. In C notation this is \mbox{\tt (a \& $\sim$b) == 0}.

In our implementation the divmaps $d_{x_i^t}$ are parametrized by a
pure power $x_i^t$. Then $d_{x_i^t}(a)=1$ if $x_i^t|a$. We choose the
divmaps based on the monomials in the data structure and periodically
recalculate the divmaps so that they are always appropriate for the
monomials in the datastructure.

The divmask version of our monomial list keeps a divmask for each
monomial. The divmasks eliminate around 98\% of all divisibility
checks for most examples when using the monomial list --- see
\appRef{app:monds}. We also have a divmask version of our kd-tree
where the internal nodes have a divmask of the gcd of the monomials in
that subtree, and the leaves also have a divmask for each
monomial. The subtree rooted at a node does not have to be searched
for a divisor of $a$ if the divmask at that node implies that the gcd
of the monomials in the leaves does not divide $a$.

Table \ref{tab:sbTimes} shows the performance of divmasks, kd-trees
and monomial lists. The baseline is a divmask kd-tree. The kd-tree and
the divmask monomial list are both an improvement on a monomial list,
and the combination of both techniques (baseline) is the fastest in
every case.

\subsection{The S-pair Triangle}
\label{sec:sptri}

There are $\binom{n}{2}$ S-pairs among $n$ basis elements, and for
large $n$ the time spent on ordering those S-pairs that are not
eliminated can be significant. Ordering S-pairs by signature are a
requirement in \galg{}. It it not required but is still a good idea
for the classic Buchberger algorithm since it can be a large speed up
to reduce certain S-pairs first. Other than the time spent on ordering
S-pairs, just storing the S-pairs can also consume a large amount of
memory; especially so for signature algorithms since signature
\grobner{} bases are larger than minimal \grobner{} bases. We present
an S-pair data structure that is fast and uses little memory.

We want to order S-pairs according to some total order $\prec$. We
need a data structure that can give us the minimum S-pair according to
$\prec$, and the data structure needs to support insertion of new
S-pairs every time a new element is added to the basis. So what we
need is a \emph{priority queue} on S-pairs.

A straight forward solution is to use a heap, geobucket or tournament
tree (see Section \ref{sec:prio}) to order the S-pairs. A problem here
is that S-pairs are frequently ordered according to a monomial, such
as in the case of signature algorithms, so it is necessary to store a
monomial for every S-pair in the queue in order to allow fast
comparison by $\prec$. For a basis with 10,000 elements in 50
variables that requires storing up to 5 billion exponents, which at 4
bytes per exponent translates into ~20GB of memory. Many of those
S-pairs are likely going to be eliminated, but the memory overhead is
still ~4GB even if 80\% of the S-pairs are eliminated before putting
them into the queue.

We propose the \emph{S-pair triangle}, which is a priority queue for
S-pairs that only needs to store a single integer per S-pair in the
queue. It is based on the observation that new S-pairs are constructed
in large batches every time a new element is added to the basis. We
sort the new batch of S-pairs according to $\prec$ and maintain a
(small) priority queue on the $\prec$-minimal element from each
batch. If we place the basis elements in a row, and we place each
sorted batch as a column above the corresponding basis element, then
we get the triangle shape that the data structure is named after.

The minimal S-pair in the small priority queue (that has an element
from each column) is also the minimal S-pair over all. To extract the
minimal S-pair $p$, we remove it from the small priority queue and
insert the next-smallest element from the column that $p$ comes from
in the triangle. This is analogous to the compressed priority queues
from Section \ref{sec:prio}, where the columns of the triangle play
the same role as reducers do in the compressed priority queues.

The main attractive property of the S-pair triangle is that we can
throw out the monomials associated to the S-pairs once they are sorted
into columns. We can do so because we only ever need to compare the
$\prec$-minimal S-pair from each column with any other S-pair. So
instead of having to store up to $\binom{n}{2}$ monomials, we only
need to store up to $n$ monomials --- one monomial for each of the $n$
columns. Another memory consumption benefit is that all the S-pairs
$\spair i j$ in column $j$ share the same $j$, so we only need to
store $i$. So we only need memory for one integer per S-pair.

For very large bases even just one integer per S-pair in the queue can
add up to a substantial amount of memory. In our implementation we use
a 16 bit integer for the columns of the first $2^{16}=65,536$ basis
elements, and we then use a 32 bit integer for basis elements beyond
that. This technique halves the memory used on S-pairs in most cases
compared to using a 32 bit integer for all columns.

We have implemented the S-pair triangle with a tournament tree in
front and with a heap in front. We have also implemented an S-pair
priority queue with all the S-pairs in a heap and in a tournament
tree. The baseline is the tournament tree in front of an S-pair
triangle. In Table \ref{tab:sbTimes} we see that only yang1 and mayr42
stress the S-pair queue. For those two examples we see that the
baseline S-pair triangle with a tournament tree in front is the
fastest while the S-pair triangle with a heap in front is a little
slower. The pure heap and tournament tree are much slower and they
make the program consume more than 4 GB of ram on yang1. We conclude
that the S-pair triangle is fast and uses little memory and it works
well with a tournament tree in front.

\section{Experiments}
\label{sec:exp}

We have written an implementation of \galg{} that we use for these
comparisons \cite{onlinePaper}. It has all the improvements that we
present in this paper. Its main current weakness is that it does not
use F4 reduction. We have also started writing a classic Buchberger
implementation for comparison. It benefits from our data structures
and the S-pair elimination component is state of the art but it is
otherwise na\"ive.

We have chosen the examples to show a wide range of behaviors. Some of
the ideals such as yang1 stress the handling of monomials and divisor
queries, while others such as hcyclic8 stress the reduction
procedure. We make the input bases for these examples available online
\cite{onlinePaper}. The example joswig101 uses an elimination order
that eliminates the first 4 variables, breaking ties with grevlex. The
input bases are interreduced.

All benchmarks were run on an Apple MacBookPro intel core i7, at
2.66GHz, with 8 GB of RAM. We use Magma V2.18-1, Maple 14's FGb,
Singular 3-1-3, Macaulay2 version 1.4 (Algorithm LinearAlgebra) and
our classic Buchberger algorithm implementation ``A''. The Macaulay 2
times for inhomogeneous ideals are not given as the F4 implementation
in Macaulay 2 only works for homogenous ideals currently. The FGb
entries with a ``*'' are cases where Maple initially used FGb to do
the calculation, but then terminated FGb and used its own less fast
internal code instead.

We would like to give a definitive discussion about the relative
merits of the various \grobner{} basis algorithms, but we
unfortunately find that the task is currently impossible. The best
\grobner{} basis implementations for the common case of inputs that
cause a lot of time to be spent on polynomial reduction are currently
those in Magma and FGb, and it is not possible to inspect the source
code of either of those systems. As such, there is no way to be
certain about what it is that causes these implementations to work
well. Collecting experimental data without knowing what the experiment
being done actually is is not the highest scientific ideal.

We give an analysis of the differences in Table \ref{tab:examples}
with the caveat that we must necessarily make assumptions that we
cannot verify for the reasons just given.

Our na\"ive Buchberger implementation does very well on yang1 and
mayr42 which we suspect is because the other systems do not use
kd-trees for divisor queries and those two ideals stress the divisor
query infrastructure due to having many variables and many elements in
the final \grobner{} basis. It does poorly on the remaining ideals
because it is not a mature implementation.

Macaulay 2 does well on hcyclic8. That is because of the use of F4, as
the very similar non-F4 Buchberger implementation in M2 (not shown) is
much slower on hcyclic8. FGb and M2 use the same amount of time on
hcyclic8. Our best guess is that Magma is significantly faster here
because its F4 implementation is very good.

Table \ref{tab:sbTimes} shows the times for computing signature bases
on our set of examples.  The baseline algorithm uses a hashed
geobucket for \sigred{}, and a divmask kd-tree for divisor queries and
all optimizations that we have presented.

The term order on $R^m$ for all these experiments is the Schreyer
(induced) order, which is the same as GVW's g2 order.  GVW reports and
our experience confirms that this is often the best order for
computing signature bases.  There are notable exceptions.  For
example, consider the free module order: higher component is greater.
On ties, use the Schreyer order.  Then the signature basis for the
joswig101 example is computed in $6.2$ seconds, with $1242$ basis
elements, much faster than all the \grobner{} times reported for that
example in Table \ref{tab:examples}.  However, the result is still
much larger than the reduced \grobner{} basis.

The \galg{} implementation is the fastest on jason210 and performs
reasonably on the other ideals except for yang1. The reason for the
slow performance on yang1 is that the signature \grobner{} basis is
much larger than the minimal \grobner{} basis for yang1. For the other
ideals we suspect that FGb and Magma are faster because they use F4
reduction, so we suspect that the comparison is not useful for
determining if \galg{} is faster than F5.

Recall that \galg{} computes the initial module of the syzygy module
of the original basis. GVW explain how to compute the \grobner{} basis
of the syzygy module from this initial syzygy module. Since \galg{}
often computes the initial syzygy module in about the same time as it
takes to compute a \grobner{} basis for most of these examples,
\galg{} should be the best algorithm for computing \grobner{}
bases of syzygy modules.

     \begin{table*}
       \centering
       \begin{tabular}{l|rrrllr|rrrrr|r}
         Example&$p$&nvars&neqns&homog?& order & nGB & magma & FGb & Sing & M2 & A & SB \\
         \hline
         joswig101 & 101 & 5 & 5 & no & elim(4) & 5 & 56.3 & n/a & 122.6 & n/a & * & 93.0\\
         jason210 & 32003 & 8 & 3  & yes & grevlex & 900 & 6.0& 57.5 & 2.3 &         12.1  & 4.6 & 1.4\\
         katsura10 & 101 & 10 & 10 & no & grevlex & 272 & .5 & 1.9 & 7.6 &  n/a  & 22.9 & 2.8 \\
         katsura11 & 101 & 11 & 11 & no & grevlex & 537 & 3.5 & 13.2 & 63.1 &
         n/a & 253.0 & 18.4 \\
         hcyclic8 & 101 & 9 & 8 & yes & grevlex & 1182 & 3.5 & 12.5 & 43.0 &  12.5 & 162.0 & 111.5\\
         yang1 & 101 & 48 & 66 & yes & grevlex & 4761 & 29.0 & * & 85.3 & 64.1   & 5.6 & 1333.0\\
         mayr42 & 101 & 51 & 44  & yes & grevlex & 8534 & 54.2 & 347.0 & 218.3   & 89.6 & 26.9 & 273.0\\
       \end{tabular}
       \caption{Input data and time in seconds for several implementations.}
       \label{tab:examples}
     \end{table*}

\setlength{\tabcolsep}{4pt}

     \begin{table*}
     \centering
     \begin{tabular}{l|rrrrrrr}&joswig101&jason210&katsura10&katsura11&hcyclic8&yang1&mayr42 \\
     \hline
     \#spairs: &1,209,790&987,715&37,950&148,240&14,680,071&1,998,099,720&523,633,341 \\
     \hline
     elim via non-regular criterion&655&1,191&206&698&2,821&111,120&362,703 \\
     elim via base divisor criterion&686,300&346,714&14,864&62,634&6,711,383&1,415,552,384&364,970,054 \\
     elim via signature criterion&281,682&573,998&10,998&46,170&7,154,919&409,866,276&149,612,312 \\
     \hline
     \#spairs queued&241,153&65,812&11,882&38,738&810,948&172,569,940&8,688,272 \\
     \hline
     elim via duplicate signature&219,554&55,475&9,283&32,834&713,321&116,423,105&4,337,133 \\
     elim via signature criterion(late)&16,720&5,235&2,110&4,967&75,987&45,461,188&4,036,194 \\
     elim via Koszul criterion&11&1&71&88&30&14,165&376 \\
     elim via rel. prime criterion&0&2&71&148&7&31,762&5,507 \\
     elim via singular criterion(late)&3,101&3,338&0&0&15,430&10,490,908&111,466 \\
     \hline
     \#spairs which need reduction&1,767&1,761&347&701&6,173&148,812&197,596 \\
     reduce to SB elements&1,551&1,403&266&534&5,411&63,150&32,318 \\
     reduce to new syzygy signatures&216&358&81&167&762&85,662&165,278 \\

     \end{tabular}
     \caption{Number of S-pairs eliminated by the various criteria in \galgex{}.}
     \label{tab:sigSPair}
     \end{table*}

\setlength{\tabcolsep}{6pt}

\begin{table*}
\centering
\begin{tabular}{l|rrrrrrr}&joswig101&jason210&katsura10&katsura11&hcyclic8&yang1&mayr42 \\
\hline
\#SB&1,556&1,406&276&545&5,419&63,216&32,362 \\
\#monomials&760,690&519,315&100,626&387,769&3,281,515&1,224,044&64,724 \\
\hline
baseline&93&1&3&18&112&1333&273 \\
no fast ratio&134&1&2&20&120&2753&565 \\
no base divisors&90&1&2&19&112&2022&478 \\
early koszul&93&2&2&18&115&1341&337 \\
\hline
divmask monomial list&84&2&2&19&139&3917&835 \\
monomial list&179&9&4&37&1270&$> 8$ hours&$> 30$ min \\
kd-tree&100&2&2&19&119&2113&419 \\
\hline
spair-tourTree&93&1&2&19&114&$> 4$ GB&320 \\
spair-heap&118&2&3&24&147&$ > 4$ GB&379 \\
spair-heap-triangle&92&1&2&18&112&1488&277 \\

\end{tabular}
\caption{Time in seconds for variants of \galgex{}.}
\label{tab:sbTimes}
\end{table*}

\setlength{\tabcolsep}{3.5pt}

     \begin{table*}
     \centering
       \small
       \begin{tabular}{l|rrrrrr|rrrrrr|rrrrrr}
       Reducer & \multicolumn{6}{|c|}{Tour tree} & \multicolumn{6}{c}{Heap} &
         \multicolumn{6}{c}{Geobucket} \\
       \hline
       Hashed & x & x & . & . & . & . &      x & x & . & . & . & . &      x & x & . & . & . & . \\
       Dedup  & . & . & x & x & . & . &      . & . & x & x & . & . &      . & . & x & x & . & . \\
       Compressed & x & . & x & . & x & . &      x & . & x & . & x & . &      x & . & x & . & x & . \\
       \hline
     joswig101 & 101 & 95 & 247 & 640 & 226 & 614 & 104 & 93 & 173 & 655 & 241 & 609 & 96 & 93 & 124 & 139 & 206 & 439\\
     jason210 & 1 & 2 & 2 & 4 & 2 & 4 & 1 & 2 & 2 & 4 & 2 & 3 & 1 & 2 & 2 & 2 & 2 & 3\\
     katsura10 & 3 & 2 & 12 & 37 & 10 & 33 & 3 & 2 & 6 & 35 & 11 & 31 & 3 & 2 & 4 & 6 & 11 & 21\\
     katsura11 & 21 & 19 & 106 & 373 & 91 & 348 & 21 & 19 & 57 & 404 & 105 & 357 & 21 & 19 & 39 & 45 & 97 & 245\\
     hcyclic8 & 121 & 113 & 538 & 2062 & 470 & 1848 & 123 & 117 & 296 & 2209 & 517 & 1968 & 125 & 116 & 230 & 258 & 505 & 1275\\
     yang1 & 1330 & 1330 & 1329 & 1335 & 1330 & 1336 & 1330 & 1332 & 1329 & 1334 & 1330 & 1334 & 1337 & 1339 & 1338 & 1335 & 1338 & 1335\\
     mayr42 & 274 & 273 & 273 & 274 & 273 & 274 & 272 & 279 & 273 & 274 & 273 & 274 & 282 & 273 & 272 & 277 & 273 & 277\\

       \end{tabular}
     \caption{Time in seconds using different reducer data structures in \galgex{}.}
     \label{tab:reducer}
     \end{table*}

\hideAppendix{
\appendix
%Appendix A
\section{The \galg{} Algorithm}
\label{app:alg}

These appendices contain material that we had to cut to fit the page
within the ISSAC page limit of 8 pages. In particular the appendices
contain proofs of all the theorems from the paper. The theorem numbers
are not consecutive because we repeat the theorems that appear in the
main paper and they retain their original number.

\subsection{Setup}

Our notation and terminology differs from both that of Gao, Volny and
Wang and of Arri and Perry. Part of the difference is that we attempt
to make the language and description of \galgex{} as close as possible
to that for the classic Buchberger algorithm.

All notation for signature algorithms have used a pair $(f,s)$ where
$s$ is the signature of $f$. Gao, Volny and Wang let $f\in R^n$ rather
than $f\in R$. We consider it an advantage not to have three symbols
$p=(f,s)$ tied up for every pair, and we also avoid any ambiguity
about whether the word ``pair'' refers to an S-pair or a basis
element. We have not used the natural concept that $\hd u=\sig u$,
since then we can talk about both the signature $\sig u$ and lead term
$\hdp u$ of $u$ without ambiguity. The use of a bar to signify the
mapping $u\mapsto\proj u$ is also not standard, but it is convenient
since our other notation requires heavy use of the mapping.

An implementation of the algorithm does not need to maintain a full
representation $u\in R^n$ of each polynomial $\proj u$ just because
the mathematical arguments concern an element of $R^n$. An
implementation only needs to store $\sig u$ and $\proj u$.

If $a\simeq b$ then we say that $a$ and $b$ are \emph{like}, as in the
phrase ``collecting like terms''. In section \ref{sec:algsetup} we
define the signature of 0. This is never relevant for \galgex{} as we
never add two elements with like signatures, so there is no occasion
for the coefficient of the signature to become zero. It might seem
that we should simply say that the zero syzygy does not have a
signature and leave it at that. However, even non-zero syzygies can
have a zero signature. This happens when a non-zero syzygy $v\in R^n$
maps to $\phi(v)=0\in R^m$. Even though that is never relevant for
\galgex{}, defining $\phi(v)$ in all cases allows us to avoid
addressing the issue in the paper.

\subsection{Division With Signatures}

The definition of \sigdiv{} is intended to be as close as possible to
the definition of classic polynomial division. We had originally
described classic polynomial division in the paper, and then
introduced \sigdiv{} after that with as few changes as possible to
underscore the analogy. Our original classic polynomial division
description is here, as well as pseudo code for classic polynomial
division, \sigdiv{} and regular division.

We had originally referred to \sigred{} as \emph{signature reduction},
but we adopted Arri and Perry's name for it because it saved space and
we had tremendous trouble fitting the paper into the limit of 8
pages. We debated referring to \sigred{} as just \emph{reduction} with
no prefix, but that risked confusing \sigred{} with one of regular
reduction, classic polynomial reduction and a singular reduction step.

\subsubsection*{Polynomial division in $R$}
\label{app:classicDiv}

The usual notion of polynomial division is to divide a polynomial
$f\in R$ by a finite set of polynomials $\extendBasis n\subseteq
R$. This yields a quotient $q\in R^n$ and a remainder $r\in R$ such
that
\begin{enumerate}
\item $f=\proj q + r$,
\item $\hd f\geq\hd{q_ig_i}\text{ for }i=1,\ldots,n$,
\item no term of $r$ is divisible by any $\hd{g_i}$ for $g_i\in\extendBasis n$.
\end{enumerate}
If $f$ is zero then so are $q$ and $r$. We \emph{divide} to get the
quotient $q$ and we \emph{reduce} to get the remainder $r$. The
process is the same so the distinction between division and reduction
is just what part of the result $(q,r)$ that we most care about. We
say that $f$ is \emph{reduced} if $q=0$ and otherwise $f$ is
\emph{reducible}.

If $r=0$ then call $q$ a \emph{representation} of $f$ and then $f$ has
a representation. Since polynomial division by something that is not a
\grobner{} basis does not have a unique outcome, $f$ might have a
representation even if the polynomial division algorithm does not
arrive at a zero remainder.

To compute $(q,r)$ we perform reduction steps. If $t$ is a term of $f$
then we can reduce that term by $g_i\in\extendBasis n$ when $\hd{g_i}$
divides $t$. Let $a$ be a monomial such that $\hd{ag_i}=t$. Then we
perform a reduction step by subtracting $ag_i$ from $f$ and adding
$a\mbasis i$ to $q$. We continue this process to complete the
reduction.

There is a distinction between reduction and \emph{top reduction}. In
top reduction the reduction is halted as soon as the initial term of
$f$ cannot be reduced. We say that $f$ is \emph{top reducible} if its
initial term can be reduced and otherwise it is \emph{top reduced}.

The following pseudo code implements polynomial division in $R$. It
returns a pair $(q,r)$ where $q\in R^n$ is the quotient and $r\in R$
is the remainder.
\ \\
\ \\
{\bf Reduce($f\in R$)}
\begin{algorithmic}
\STATE $q\gets 0\in R^n$\quad \COMMENT{$q$ is the quotient}
\STATE $r\gets 0\in R$\quad \COMMENT{$r$ is the sum of those terms of $f$ that we have not been able to reduce}
\WHILE{$f\neq r$}
  \STATE $t\gets\hd{f-r}$\quad\COMMENT{$t$ is the maximal term of $f$ that we have not yet processed}
  \STATE $d\gets 0$\quad\COMMENT{$d$ will store any divisor that we may find}
  \FOR[look for a divisor]{$i=1,\ldots,n$}
    \IF{$\hd{g_i}|t$}
      \STATE $d\gets\frac{t}{\hd{g_i}}\mbasis i$\quad\COMMENT{for efficiency you would stop the for loop here}
    \ENDIF
  \ENDFOR
  \IF{$d\neq 0$}
    \STATE $f\gets f-\proj d$\quad\COMMENT{reduce by $d$}
    \STATE $q\gets q+d$\quad\COMMENT{record that we reduced by $d$}
  \ELSE
    \STATE $r\gets r+t$\quad\COMMENT{record that the term $t$ could not be reduced}
  \ENDIF
\ENDWHILE
\RETURN $(q,r)$\quad\COMMENT{$q\in R^n$ is the quotient and $r\in R$ is the remainder}
\end{algorithmic}

\subsubsection{\sigred{} in $R^n$}

The following pseudo code implements \sigred{} in $R^n$. It returns a
pair $(q,r)$ where $q\in R^n$ is the quotient and $r\in R^n$ is the
remainder.
\ \\
\ \\
{\bf SReduce($u\in R^n$)}
\begin{algorithmic}
\STATE $T\gets\sig u$\quad\COMMENT{The signature of $u$ can change, so we have to record it}
\STATE $q\gets 0\in R^n$\quad\COMMENT{$q$ is the quotient}
\STATE $r\gets 0\in R$\quad\COMMENT{$r$ is the sum of those terms of $\proj u$ that we have not been able to reduce}
\WHILE{$\proj u\neq r$}
  \STATE $t\gets\hd{\proj u-r}$\quad\COMMENT{$t$ is the maximal term of $\proj u$ that we have not yet processed}
  \STATE $d\gets 0$\quad\COMMENT{$d$ will store any divisor that we may find}
  \FOR[look for a divisor]{$i=1,\ldots,n$}
    \IF{$\hd{g_i}|t\text{ and }\sig{\frac{t}{\hd{g_i}}\mbasis i}\leq\sig u$}
      \STATE $d\gets\frac{t}{\hd{g_i}}\mbasis i$\quad\COMMENT{for efficiency you would stop the for loop here}
    \ENDIF
  \ENDFOR
  \IF{$d\neq 0$}
    \STATE $u\gets u-d$\quad\COMMENT{reduce by $d$}
    \STATE $q\gets q+d$\quad\COMMENT{record that we reduced by $d$}
  \ELSE
    \STATE $r\gets r+t$\quad\COMMENT{record that the term $t$ could not be reduced}
  \ENDIF
\ENDWHILE
\RETURN $(q,u)$\quad\COMMENT{$q\in R^n$ is the quotient and $u\in R^n$ is now the remainder as $\proj u=r$}
\end{algorithmic}

\subsubsection{Regular Division in $R^n$}

The result of regular division of $u\in R^n$ by $\extendBasis n$ is a
quotient $q\in R^n$ and a remainder $r\in R^n$ such that
\begin{enumerate}
\item $u=q+r$,
\item $\hdp u\geq\hd{q_ig_i}\text{ for }i=1,\ldots,n$,
\item if $\sig u>\sig{a\mbasis i}$ for a monomial $a$ then $\hdp{a\mbasis
  i}$ does not equal any term of $r$,
\item $\sig u>\sig q$.
\end{enumerate}
These conditions are identical to the ones for \sigred{} except that
``$\geq$'' has been replaced with ``$>$'' for the last condition. In
the same way, the condition for $a\mbasis i$ to regular reduce a term
$t$ of $\proj u$ becomes
\[
\hdp{a\mbasis i}=t
\quad\text{ and }\quad
\sig{a\mbasis i}<\sig u.
\]
Note that $\sig{a\mbasis i}<\sig u$ is equivalent to $\sig
u=\sig{u-a\mbasis i}$. To see this, recall that the signature includes
a coefficient. So a regular reduction step happens when it can be
carried out without changing the signature.

The pseudo code for \sigred{} can be modified to carry out regular
reduction by replacing the line

\begin{algorithmic}
\STATE\quad\quad{\bf{}if }$\hd{g_i}|t\text{ and }\sig{\frac{t}{\hd{g_i}}\mbasis i}\leq\sig u${\bf{} then}
\end{algorithmic}

with

\begin{algorithmic}
\STATE\quad\quad{\bf{}if }$\hd{g_i}|t\text{ and }\sig{\frac{t}{\hd{g_i}}\mbasis i}<\sig u${\bf{} then}
\end{algorithmic}

We define \emph{regular division}, \emph{regular reduction},
\emph{regular reduced}, \emph{regular reducible}, \emph{regular top
  reduced} and \emph{regular top reducible} analogously to how those
terms are defined for \sigred{}.

\subsection{S-pairs}

We give the proofs that are missing from Section \ref{sec:spairs}. The
arguments are essentially the same as the ones given by Gao, Volny and
Wang, though stated in a different way.

\repthm{theorem}{thm:spairs}{
Let $T$ be a term of $R^m$. Assume for all S-pairs $p$ with $\sig
p\leq T$ that if $p\p$ is the result of regular reducing $p$, then
$p\p$ is singular top reducible or a syzygy. Then all elements $u\in
R^n$ with $\sig u\leq T$ \sigrede{} to zero.
}
\begin{proof}
Suppose to get a contradiction that there is a $u\in R^n$ with $\sig
u\leq T$ such that $u$ does not \sigrede{} to zero. Assume without
loss of generality that $\sig u$ is $\leq$-minimal such that $u$ does
not reduce to zero. We may also assume that $u$ is top reduced.

By Lemma \ref{lem:p1} there is an S-pair $p$ whose signature divides
$\sig u$. Also, $ap\p$ is regular top reduced where $p\p$ is the
result of regular reducing $p$ and $a$ is the monomial such that
$\sig{ap}=\sig u$.

Now $\sig{ap\p}=\sig u$ and both $ap\p$ and $u$ are regular top
reduced, so by Lemma $\ref{lem:p3}$ we get that $\hdp{ap\p}=\hdp
u$. Then anything that top reduces $ap\p$ will also top reduce $u$. We
know that $ap\p$ is top reducible since $p\p$ is top reducible by
assumption. Thus $u$ is top reducible which is a contradiction.
\end{proof}

\repthm{lemma}{lem:p3}{
Let $L\in R^m$ be a term such that all $v\in R^n$ with $\sig v<L$
\sigrede{} to zero. Let $a,b\in R^n$ such that $\sig a=\sig b\leq
L$. Then $\hdp a=\hdp b$ if $a$ and $b$ are regular top reduced. Also,
$\proj a=\proj b$ if $a$ and $b$ are regular reduced.
}
\begin{proof}
It suffices to prove that $\proj a=\proj b$ if $a$ and $b$ are regular
reduced as the other statement is a corollary of that.

Suppose to get a contradiction that $\proj{a-b}\neq 0$. As
$\sig{a}=\sig{b}=L$ we get that $\sig{a-b}<L$ so $a-b$ reduces to
zero. Then in particular $a-b$ is top reducible. Swap $a$ and $b$ if
necessary to ensure that $\hdp{a-b}$ has the same monic part as a term
in $a$. Then that term of $a$ is \emph{regular} reducible since
$\sig{a-b}<\sig a$. This contradicts the assumption that $a$ is
regular reduced.
\end{proof}

\begin{lemma}
\label{lem:p1}
Let $u\in R^n$ be top reduced and have non-zero signature. Assume that
all $v\in R^n$ with $\sig v<\sig u$ reduce to zero. Then there exists
an S-pair $p$ whose signature divides the signature of $u$. Also,
$cp\p$ is regular top reduced where $p\p$ is the result of regular
reducing $p$ and $c$ is the monomial such that $\sig{cp}=\sig u$.
\end{lemma}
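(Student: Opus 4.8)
The plan is to run a Buchberger-style cancellation argument. Write $T\defeq\sig u$ and $\mu\defeq\hdp u$, and note that since $u$ is top reduced it is a fortiori regular top reduced. Introduce the finite set
\[
M\defeq\setBuilder{t\mbasis k}{t\text{ a monomial},\ 1\le k\le n,\ \sig{t\mbasis k}=T};
\]
it is nonempty because $T$, being a term of $R^m$, equals $t_0\mbasis{k_0}$ for some monomial $t_0$ and some $k_0\le m$, and then $\sig{t_0\mbasis{k_0}}=t_0\mbasis{k_0}=T$. Fix $w=s\mbasis i\in M$ with $\hdp w$ as small as possible in the term order.

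First I would show that $w$ is regular top reducible. If it were not, then $w$ would be regular top reduced with $\sig w=T=\sig u$, so Lemma \ref{lem:p3} (applied with $L=T$, whose hypothesis is exactly the hypothesis of the present lemma) would give $\hdp w=\hdp u=\mu$. Since $\hdp w=s\hd{g_i}$ this forces $\hd{g_i}\mid\mu$, and then $\sig{\frac{\mu}{\hd{g_i}}\mbasis i}=\sig{s\mbasis i}=T=\sig u$, so $\mbasis i$ reduces the lead term of $\proj u$ --- contradicting that $u$ is top reduced. Hence $w=s\mbasis i$ is regular top reducible: there are an index $j$ and a monomial $a$ with $\hdp{a\mbasis j}=\hdp w$ and $\sig{a\mbasis j}<T$. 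From $a\hd{g_j}=s\hd{g_i}=\hdp w$ it follows that $\lcm(\hd{g_i},\hd{g_j})$ divides $\hdp w$, so $\hdp w=c\,\lcm(\hd{g_i},\hd{g_j})$ for a monomial $c$, and a short computation (using that the two monomial multiples of $\mbasis i$ and $\mbasis j$ forming $\spair i j$ have equal image-lead terms) gives $c\,\spair i j=s\mbasis i-a\mbasis j=w-a\mbasis j$. Because $\sig{s\mbasis i}=T>\sig{a\mbasis j}$, the S-pair $p\defeq\spair i j$ is regular, $\sig{cp}=T=\sig u$, and $\sig p\mid\sig u$. This proves the first assertion.

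For the second assertion the key observation is that $c\,p=w-a\mbasis j$ is exactly the result of one regular reduction step applied to $w$. Hence the full regular reduction of $w$ may be computed by first performing this step and then regular reducing $cp$, and its result $w\p$ is regular reduced, hence regular top reduced, with $\sig{w\p}=T=\sig u$; so $\hdp{w\p}=\mu$ by Lemma \ref{lem:p3} compared against $u$. It then remains to show that $c\,p\p$ --- that is, $c$ times the full regular reduction $p\p$ of $p$ --- is itself regular top reduced, equivalently that multiplying $p\p$ by $c$ has not re-exposed its lead term $c\hdp{p\p}$ to a regular reduction step in signature $T$. A would-be reducer $\mbasis\ell$ with $\hd{g_\ell}\mid\hdp{p\p}$ is excluded because $p\p$ is regular reduced; the only remaining possibility is a reducer with $\hd{g_\ell}\mid c\hdp{p\p}$ but $\hd{g_\ell}\nmid\hdp{p\p}$, one that borrows variables from the multiplier $c$. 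Ruling this case out is the crux, and I expect it to be the main obstacle: this is where the minimal choice of $\hdp w$ in $M$ must be exploited in full --- roughly, one wants to argue that such a borrowing reducer would, paired with $\mbasis i$, yield an element of $M$ (or a partial regular reduction of $w$) with lead term strictly below $\hdp w$. Getting this bookkeeping exactly right, possibly after a more careful selection of $p$ than the naive first-reducer choice, is where I would expect to spend the most effort.
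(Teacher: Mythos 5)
Your proof of the first assertion is complete and correct: taking $w=s\mbasis i\in M$ with $\hdp w$ minimal, using Lemma \ref{lem:p3} to show that $w$ must be regular top reducible (else $\hdp w=\hdp u$ and $s\mbasis i$ would singularly top reduce $u$), and assembling $cp=w-a\mbasis j$ does yield a regular S-pair $p=\spair i j$ with $\sig{cp}=\sig u$. This parallels the paper's construction, which instead splits $u=(u-L)+L$ with $L=\sig u$ and pairs $L$ against a reducer of $u-L$. The problem is the second assertion, which is the substantive half of the lemma and which you leave unproved: you explicitly flag the case of a regular top reducer $\mbasis\ell$ of $cp\p$ with $\hd{g_\ell}\mid c\hdp{p\p}$ but $\hd{g_\ell}\nmid\hdp{p\p}$ as the unresolved crux. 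This is a genuine missing step, not bookkeeping; in the paper it is the entire content of Lemma \ref{lem:p2}, which is proved by a descent that, whenever $cp\p$ is still regular top reducible, manufactures a new S-pair whose relevant multiple has strictly smaller lead term, terminating by the well-order. Moreover, the closure you gesture at --- pairing the borrowing reducer with $\mbasis i$ --- is not the move that works.

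What saves your approach is that the minimality of $\hdp w$ in $M$, which you only use to get started, in fact kills the bad case outright, provided you aim it at $p\p$ rather than at the borrowing reducer. If the monomial part of $c$ is trivial, then $cp\p$ is a scalar multiple of the regular reduced element $p\p$ and there is nothing to prove. Otherwise $\sig{p\p}=\sig p=\sig{cp}/c<\sig{cp}=\sig u$, so $p\p$ reduces to zero by hypothesis. If $\proj{p\p}=0$ we are done. If $\proj{p\p}\ne 0$, then since $p\p$ is regular reduced, the reducer of its lead term must be singular: there is $e\mbasis k$ with $\hdp{e\mbasis k}=\hdp{p\p}$ and $\sig{e\mbasis k}\simeq\sig{p\p}$. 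A suitable scalar multiple of $ce\mbasis k$ then lies in $M$ and has lead term $\simeq c\hdp{p\p}\le\hdp{cp}<\hdp w$ (the strict inequality because the lead terms of $\proj{s\mbasis i}$ and $\proj{a\mbasis j}$ cancel in $cp$), contradicting the minimality of $\hdp w$. So for the minimal $w$ the element $cp\p$ can never be regular top reducible; this one-step argument replaces the paper's descent in Lemma \ref{lem:p2} and is essentially the Arri--Perry route. As written, though, your proof stops short of it.
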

\begin{proof}
We construct an S-pair $p$ whose signature divides $\sig u$. Then the
rest follows from Lemma \ref{lem:p2}

\proofPart{\text{Consider initial terms}} If $\alpha,\beta\in R^n$ are
both top reducible and $\sig\alpha\leq\sig{\alpha+\beta}$ and
$\sig{\beta}\leq\sig{\alpha+\beta}$ then $\alpha+\beta$ is top
reducible or $\hdp{\alpha}+\hdp{\beta}=0$. We will use this argument
with $\alpha\defeq u-\sig u$ and $\beta\defeq\sig u$.

Let $L\defeq\sig u$. Then $u-L$ has smaller signature than $u$ does so
it reduces to zero and in particular it is top reducible. Also $L$ is
top reducible since it top reduces itself. Yet the sum $(u-L)+L=u$ is
not top reducible so
\[
\hdp{u-L}+\hdp{L}=0.
\]

\proofPart{\text{Construct $p$}}
Let $a\mbasis i\defeq L=\sig u$. As $u-L$ reduces to zero there is a
reducer $b\mbasis j$ such that
\[
\hdp{b\mbasis j}=\hdp{u-L},
\quad\quad
\sig{b\mbasis j}\leq\sig{u-L}.
\]
Let $p\defeq\spair i j$. Then $p=\frac{a}{c}\mbasis
i+\frac{b}{c}\mbasis j$ where $c\defeq\gcd(a,b)$ and $\sig{cp}=L$ since
\[
\hdp{a\mbasis i}=\hdp L=-\hdp{u-L}=-\hdp{b\mbasis j}
\]
and
\[
\sig{a\mbasis i}>\sig{u-L}\geq\sig{b\mbasis j}.
\]
So $p$ is an S-pair whose signature divides $L=\sig u$.
\end{proof}

\begin{lemma}
\label{lem:p2}
Let $L\in R^m$ be a term such that all $v\in R^n$ with $\sig v<L$
reduce to zero. Let $p$ be an S-pair whose signature divides $L$. Then
there exists an S-pair $q$ whose signature also divides $L$ and with
the following additional property. Let $b$ be the monomial such that
$\sig{bq}=L$ and let $q\p$ be the result of regular reducing $q$. Then
$bq\p$ is not regular top reducible.
\end{lemma}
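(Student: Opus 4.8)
The plan is to prove Lemma~\ref{lem:p2} by a well-ordered descent on the lead term of the scaled, regular-reduced S-pair. Consider the non-empty set of pairs $(q,q')$ where $q$ is an S-pair with $\sig q\,|\,L$ and $q'$ is a result of regular reducing $q$ (the given $p$, together with one of its regular reductions, qualifies). For such a pair put $b\defeq L/\sig q$, so $\sig{bq}=\sig{bq'}=L$. Since a term order is a well order on monomials and $0$ lies below every term, choose $(q,q')$ so that $\hdp{bq'}$ is minimal; I claim this $q$ satisfies the conclusion, i.e.\ that $bq'$ is not regular top reducible. If $\hdp{bq'}=0$ this is immediate, since then $\proj{bq'}=0$ has no lead term to reduce; and if $\sig q=L$ then $b=1$ and $q'$, being the result of regular reducing $q$, is regular reduced and hence regular top reduced. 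So the real work is to rule out the remaining case.

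Before that I would record a structural fact about scaled S-pairs. Writing $q$ as the S-pair on indices $\{a,a'\}$, we have $bq=T\mbasis a-T'\mbasis{a'}$ where $T\mbasis a$ is the summand carrying the larger signature, so $\sig{T\mbasis a}=L>\sig{T'\mbasis{a'}}$ and $\hdp{T\mbasis a}=T\hd{g_a}=\hdp{T'\mbasis{a'}}=:\Lambda$ (in particular $\hd{g_a}\,|\,\Lambda$ and $\hd{g_{a'}}\,|\,\Lambda$). Every regular reduction step taking $q$ to $q'$ uses a reducer of signature $<\sig q$, so after multiplication by $b$ it uses a reducer of signature $<L$; hence $bq'$ — and indeed any element obtained from $bq$ by regular reduction in signature $L$ — has the form $T\mbasis a+w$ with $\sig w<L$ and image lead term $\le\Lambda$, and its signature is still exactly $L$ (carried by $T\mbasis a$, with no cancellation possible since every term of the image of $w$ is $<L$). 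Moreover $T'\mbasis{a'}$ is itself a legal regular reducer of the lead term $\Lambda$ of $\proj{T\mbasis a}$, so $T\mbasis a\to bq\to^{*}bq'$ is a valid regular reduction of $T\mbasis a$ in signature $L$.

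Now suppose for contradiction that $bq'$ is regular top reducible, say by $s\mbasis i$ with $\hdp{s\mbasis i}=\tau\defeq\hdp{bq'}\neq 0$ and $\sig{s\mbasis i}<L$. Extending the regular reduction through $bq'\to bq'-s\mbasis i$ and on to a fully regular reduced element $v$ in signature $L$ gives $\sig v=L$ and $\hdp v<\tau$; by the previous paragraph this is also a regular reduction $T\mbasis a\to^{*}v$. \textbf{The main obstacle is to turn $v$ into a competitor that violates the minimality of $(q,q')$:} I want an S-pair $q^{*}$ with $\sig{q^{*}}\,|\,L$ and a regular reduction $q^{*\prime}$ satisfying $\hdp{(L/\sig{q^{*}})q^{*\prime}}\le\hdp v<\tau$. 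The idea is to find, inside $T\mbasis a\to^{*}v$, a step whose reducer $w\mbasis\ell$ removes the monomial $\Lambda$ from the current image — such a step must occur, since $\proj{T\mbasis a}$ has lead term $\Lambda$ while $\proj v$ has lead term $<\tau<\Lambda$ — and to take $q^{*}\defeq\spair a\ell$; then $\hd{g_a}\,|\,\Lambda$ and $\hd{g_\ell}\,|\,\Lambda$ force $\lcm(\hd{g_a},\hd{g_\ell})\,|\,\Lambda$, so $\spair a\ell$ is a regular S-pair (its two summands have the distinct signatures $L$ and $<L$) with $\sig{\spair a\ell}\,|\,L$, and a suitable scalar multiple of it is $T\mbasis a-w\mbasis\ell$ up to a field coefficient. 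The delicate points, which I expect to be the crux of the whole argument, are: (i) reorganizing $T\mbasis a\to^{*}v$ so that the chosen $\Lambda$-removing step comes first while every later step remains a legal regular reduction — this needs an auxiliary lemma that a regular reduction can be permuted so as to treat a designated monomial that is present throughout first, in the spirit of the bookkeeping about initial terms of sums used in the proof of Lemma~\ref{lem:p1} — so that $v$ really is exhibited as a genuine $(L/\sig{q^{*}})q^{*\prime}$ and not merely as something with lead term $<\Lambda$; and (ii) the possibility that the only available reducer of $\Lambda$ is the $T'\mbasis{a'}$ already present in $bq$, which forces $q^{*}=q$ and gives nothing. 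I would circumvent (ii) by a secondary descent — equivalently, by minimizing from the start the pair consisting of $\hdp{bq'}$ and the number of regular reduction steps of $q\to q'$, ordered lexicographically — and by then instead locating the step that first removes $\tau$ and pairing index $a$ with the reducer index it supplies. I expect one such nested induction to close the argument, with the reorder-while-staying-legal step being the part that requires the most care.
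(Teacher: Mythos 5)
There is a genuine gap, and it sits exactly where you place it: the construction of the competitor S-pair. Your descent on $\hdp{bq\p}$ is a workable reorganization of the intended argument (one can equally descend on the lead term of the scaled S-pair $bq$ itself, which strictly drops at each stage; with that choice neither the reorder-the-reduction lemma of your point (i) nor the exhibition of $v$ as a genuine reduction of a scaled S-pair is ever needed). But your candidate $q^{*}=\spair a\ell$ degenerates to $q$ in precisely the typical case, as you note in (ii), and your proposed repair --- pairing the index $a$ with the index $i$ of the reducer of $\tau=\hdp{bq\p}$ --- does not work: there is no reason that $\hd{g_a}$ divides $\tau$, so the two halves of $\spair a i$ meet at $\lcm(\hd{g_a},\hd{g_i})$, which is unrelated to $\tau$; consequently there is no reason that $\sig{\spair a i}$ divides $L$, nor that its scaled regular reduction has lead term below $\tau$. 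No secondary descent repairs this, because $a$ is simply the wrong index to pair with.

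The missing idea is to apply the hypothesis of the lemma to $q\p$ itself. When $b>1$ we have $\sig{q\p}=\sig q<L$, so $q\p$ reduces to zero; but $q\p$ is regular reduced with $\proj{q\p}\neq 0$ (otherwise $bq\p$ is a syzygy and we are done), so the next step in its reduction to zero must be a \emph{singular} top reduction: there is $c\mbasis i$ with $\hdp{c\mbasis i}=\hdp{q\p}$ and $\sig{c\mbasis i}\simeq\sig{q\p}$. Pair this $i$ with the index $j$ of the assumed regular top reducer $d\mbasis j$ of $bq\p$. Both $\hd{g_i}$ and $\hd{g_j}$ divide $\tau$, so $\spair i j$ has a monomial multiple whose two halves both have lead term $\tau$; the $\mbasis i$-half carries signature $\simeq b\sig{c\mbasis i}\simeq L$ while the $\mbasis j$-half has signature $<L$, so the multiple has signature $\simeq L$ (hence $\sig{\spair i j}$ divides $L$) and, after the cancellation at $\tau$, lead term strictly below $\tau\leq\hdp{bq}$. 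That is the competitor that drives the descent; it is manufactured from two reducers of the same monomial rather than extracted from a reduction sequence, which is why the reorganization machinery of your point (i) is unnecessary.
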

\begin{proof}
Pick a monomial $a$ such that $\sig{ap}=L$ and let $p\p$ be the result
of regular reducing $p$. We can assume that $ap\p$ is regular top
reducible as otherwise we are done. This implies that $a>1$ so $\sig
p<L$ so $p$ reduces to zero.

We are now going to construct an S-pair $q$ such that $\sig{bq}=L$ and
$\hdp{ap}>\hdp{bq}$. If $bq\p$ is not regular top reducible then we
are done. Otherwise we can do the same thing again to get yet a third
S-pair with the same properties and so on. This process must terminate
as the initial terms of the S-pair multiples $ap$, $bq$, $\ldots$ decrease
strictly at each step and there are only finitely many S-pairs
(alternatively, $\leq$ is a well order).

\proofPart{\text{Construct }c\mbasis i} As $\sig{p\p}=\sig p<L$ we
know that $p\p$ reduces to zero. Yet $\proj{p\p}$ is non-zero, so
there must be a singular reducer $c\mbasis i$ such that
\[
\hdp{c\mbasis i}=\hdp{p\p},
\quad\quad
\sig{c\mbasis i}\simeq\sig{p\p}.
\]
Recall that $\simeq$ is equality up to a non-zero element of the
ground field.

\proofPart{\text{Construct }d\mbasis j} As $ap\p$ is regular top reducible
there is a regular top reducer $d\mbasis j$ such that
\[
\hdp{d\mbasis j}=\hdp{ap\p},
\quad\quad
\sig{d\mbasis j}<\sig{ap\p}.
\]

\proofPart{\text{Construct }q} Let $q\defeq\spair i j$. Then
$q=\frac{ac}{b}\mbasis i-\frac{d}{b}\mbasis j$ where $b\defeq\gcd(ac,d)$ and
$\sig{bq}\simeq\sig{ap\p}=L$ since
\[
\hdp{ac\mbasis i}=\hdp{ap\p}=\hdp{d\mbasis j},
\quad\quad
\sig{ac\mbasis i}\simeq\sig{ap\p}>\sig{d\mbasis j}.
\]
Then $\hdp{d\mbasis j}>\hdp{bq}$ as the initial term cancels in the
subtraction $\proj{ac\mbasis i-d\mbasis j}=\proj{qb}$, so as claimed we
see that
\[
\hdp{ap}\geq\hdp{ap\p}=\hdp{d\mbasis j}>\hdp{bq}.
%\qedhere
\]
\end{proof}

\repthm{theorem}{thm:syzygies}{
Let $\extendBasis n$ be a signature \grobner{} basis and let $u\in
R^m$ be a syzygy. Then there is an S-pair $p$ that regular reduces to
a syzygy $p\p$ such that $\sig{p\p}$ divides $\sig u$.
}
\begin{proof}
We see that $u$ is reduced since it is a syzygy. Then by Lemma
\ref{lem:p1} there exists an S-pair $p$ whose signature divides the
signature of $u$. Also, $ap\p$ is regular reduced where $p\p$ is the
result of regular reducing $p$ and $a$ is the monomial such that
$\sig{ap}=\sig u$. Then Lemma \ref{lem:p3} implies that
$\proj{ap\p}=\proj u=0$. So $\proj{p\p}=0$ and we are done.
\end{proof}

\subsection{Termination}
\label{app:term}

Huang proves that a GVW-like algorithm terminates
\cite{huangTermination}, and Gao, Volny and Wang refer to Huang for
termination. Huang gives a counterexample to show that \galg{} does
not always terminate when the term order on the module and the term
order on the ring are not compatible in the sense that $a<b\biimp
a\mbasis i<b\mbasis i$.

Eder and Perry \cite{ederPerryF5Like} prove that an F5-like algorithm
terminates using an incremental term order on the module (``position
over term''). We give Perry and Eder's proof here. Their proof
requires no changes to apply to \galgex{} other than to be adapted to
our notation.

We do not need to mention the issue of compatibility of the term
orders in Theorem \ref{thm:term} since that assumption already appears
in Section \ref{sec:algsetup}.

\begin{theorem}
\label{thm:term}
Suppose that $\mmor{\mbasis i}$ is not top reducible by $\extendBasis
{i-1}$ and that all $v\in R^m$ with $\sig v<\sig{\mbasis i}$ do
\sigrede{} to zero with respect to $\extendBasis{i-1}$ for each $i\geq
m$. Then the sequence of $g_i$'s is finite.
\end{theorem}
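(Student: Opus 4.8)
The plan is to argue by contradiction using Dickson's lemma on the sequence of pairs $(\sig{\mbasis i}, \hdp{\mbasis i})$, viewed as exponent vectors. Suppose the sequence $g_1, g_2, \ldots$ were infinite. Each new basis element $g_i$ (for $i > m$) comes with a signature $\sig{\mbasis i}$ in $R^m$ and a lead term $\hdp{\mbasis i}$ in $R$. First I would record the two structural facts supplied by the hypotheses: (a) $\mmor{\mbasis i}$ is not top reducible by $\extendBasis{i-1}$, and (b) all $v \in R^m$ with $\sig v < \sig{\mbasis i}$ already \sigrede{} to zero with respect to $\extendBasis{i-1}$. Fact (b) is what forces signatures not to repeat (up to the scalar $\simeq$) among genuinely new elements: if $\sig{\mbasis i} \simeq \sig{\mbasis j}$ with $j < i$, then $\mbasis i$ would have a signature not exceeding that of a previously-added element, and the combination $\mbasis i - c\mbasis j$ would have strictly smaller signature, hence \sigrede{}s to zero, making $\mbasis i$ singular top reducible — contradicting that $g_i$ was added to the basis. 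So without loss of generality the signatures $\sig{\mbasis i}$ are pairwise non-$\simeq$.

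Next I would apply Dickson's lemma to the infinite sequence of monomials $\sig{\mbasis i} \in R^m$ (more precisely to their underlying monic monomials, remembering a signature carries both a component index in $\{1,\ldots,m\}$ and an exponent vector — finitely many components, so pigeonhole plus Dickson still applies). This yields indices $j < i$ with $\sig{\mbasis j} \mid \sig{\mbasis i}$. By Corollary \ref{col:spairSyzygy} applied in the right order — or more directly by Lemma \ref{lem:p3} together with hypothesis (b) — divisibility of signatures together with (b) forces $\mbasis i$ to reduce in a way that makes it redundant: the element of $M$ coming from $\mbasis j$ (scaled up to signature $\sig{\mbasis i}$) shows that $g_i$ is singular top reducible with respect to $\extendBasis{i-1}$, so $g_i$ would not have been added. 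Here is where I would instead more carefully run a two-dimensional Dickson argument on the pairs $(\sig{\mbasis i}, \hdp{\mbasis i})$ simultaneously, obtaining $j < i$ with $\sig{\mbasis j} \mid \sig{\mbasis i}$ \emph{and} $\hdp{\mbasis j} \mid \hdp{\mbasis i}$; then the monomial multiple $a\mbasis j$ with $\sig{a\mbasis j} = \sig{\mbasis i}$ satisfies $\hdp{a\mbasis j} = a\hdp{\mbasis j} \mid a'\hdp{\mbasis i}$ appropriately, so $\mbasis j$ top reduces $\mmor{\mbasis i}$ in the \sigdiv{} sense, contradicting hypothesis (a).

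I expect the main obstacle to be bookkeeping the interaction between the scalar ambiguity $\simeq$ and divisibility, and making sure the Dickson argument is applied to the right invariant so that \emph{both} hypotheses (a) and (b) get used (one to rule out a repeated lead term against a divisor signature, the other to rule out the case where signatures alone already collide). A secondary subtlety is that $\sig{\mbasis i}$ lives in $R^m$, so I must first fix a component $k \in \{1,\ldots,m\}$ appearing infinitely often, pass to that subsequence, and only then invoke Dickson's lemma on exponent vectors in the polynomial ring $R$. Once the right pair of monomials is extracted and the correct hypothesis invoked, the contradiction is immediate and the sequence of $g_i$'s must be finite.
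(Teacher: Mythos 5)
Your overall strategy coincides with the paper's. The paper's proof looks different on the surface because it introduces an auxiliary ring $R\p$ and the encoding $f(g,sx^v\mbasis i)=\hd{g}y_i^vy_{i1}$, but that device exists solely so that divisibility of the encoded monomials is equivalent to simultaneous divisibility of the lead term and the signature; the ascending chain condition on the monomial ideals $I_n\subseteq R\p$ is then exactly your ``two-dimensional Dickson argument on the pairs $(\sig{\mbasis i},\hdp{\mbasis i})$'' with the component index absorbed into the variable $y_{i1}$. So the skeleton is right, and your first paragraph (distinctness of signatures up to $\simeq$) is actually not needed once you run Dickson on the pairs.

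There is, however, one genuine gap at the final step. From $\sig{\mbasis j}\mid\sig{\mbasis i}$ and $\hdp{\mbasis j}\mid\hdp{\mbasis i}$ you conclude that ``$\mbasis j$ top reduces $\mmor{\mbasis i}$ in the \sigdiv{} sense, contradicting hypothesis (a),'' but this does not follow directly. Let $a$ be the monomial with $\hdp{a\mbasis j}=\hdp{\mbasis i}$ and $b$ the monomial with $\sig{b\mbasis j}=\sig{\mbasis i}$. A \sigred{} step by $a\mbasis j$ is only legal if $\sig{a\mbasis j}\leq\sig{\mbasis i}$, i.e.\ if $a\leq b$, and nothing in the two divisibilities guarantees this. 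This is precisely where hypothesis (b) must enter, and it is the content of the paper's Lemma \ref{lem:term}: if $a>b$, then $\mbasis i-b\mbasis j$ has the same lead term as $\mbasis i$ (since $\hdp{b\mbasis j}<\hdp{a\mbasis j}=\hdp{\mbasis i}$) but strictly smaller signature, hence by (b) it \sigrede{}s to zero, hence it is top reducible, and any legal top reducer of it is also a legal top reducer of $\mbasis i$. You correctly sense that both hypotheses must be used, but the role you assign to (b) (``to rule out the case where signatures alone already collide'') is not where it is actually needed; it is needed to repair the case $a>b$ in the divisibility-implies-reducibility step. With that case analysis supplied, your argument closes and matches the paper's.
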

\begin{proof}
Let $R\p$ be a polynomial ring containing all the variables
$x_1,\ldots,x_k$ of $R$. Also, let $R\p$ contain variables $y_{ij}$
for $i=1,\ldots,m$ and $j=1,\ldots,k$. Define the function
$f:R\times\set{\text{terms of }R^m}\rightarrow R\p$ by
\[
f(g,sx^v\mbasis i)\defeq\hd{g}y_i^vy_{i1}
\]
where $s$ is a non-zero element of the ground field. Then
$f(g,T)|f(g\p,T\p)$ if and only if both $\hd g|\hd{g\p}$ and $T|T\p$.

Consider the sequence of monomial ideals $I_m$, $I_{m+1}$, $\ldots$
defined by
\[
I_n\defeq \ideal{f(\proj{\mbasis 1},\sig{\mbasis 1}),\ldots,f(\proj{\mbasis n},\sig{\mbasis n})}\subseteq
R\p.
\]
If $f(\proj{\mbasis i},\sig{\mbasis i})|f(\proj{\mbasis
  j},\sig{\mbasis i})$ for $i<j$ then $\mmor{\mbasis j}$ is top
reducible by Lemma \ref{lem:term}. We have assumed that none of the
$\mmor{\mbasis j}$ are top reducible, so the sequence of monomial
ideals $I_n$ is strictly increasing and therefore finite.
\end{proof}

\begin{lemma}
\label{lem:term}
Let $u\in R^n$ such that all $v\in R^n$ with $\sig v<\sig u$
\sigrede{} to zero. If $\hdp{\mbasis i}|\hdp u$ and $\sig{\mbasis
  i}|\sig u$ then $u$ is top reducible.
\end{lemma}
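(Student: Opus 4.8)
The plan is to exhibit explicitly a term $a'$ and an index $j$ such that $a'\mbasis j$ top-reduces $u$. Note first that the hypotheses $\hdp{\mbasis i}\mid\hdp u$ and $\sig{\mbasis i}\mid\sig u$ already force $\hdp u\neq 0$ and $\sig u\neq 0$. Using $\hdp{\mbasis i}=\hd{g_i}\mid\hdp u$, let $a$ be the term with $\hdp{a\mbasis i}=a\hd{g_i}=\hdp u$. If $\sig{a\mbasis i}\leq\sig u$ we are done at once, since then $a\mbasis i$ top-reduces the leading term of $\proj u$. So assume from now on that $\sig{a\mbasis i}>\sig u$.

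Using $\sig{\mbasis i}\mid\sig u$, let $b$ be the term with $\sig{b\mbasis i}=\sig u$ (as terms, coefficient included). Compatibility of both term orders with multiplication gives $\sig{a\mbasis i}=a\sig{\mbasis i}$ and $\sig{b\mbasis i}=b\sig{\mbasis i}$, so $a\sig{\mbasis i}>b\sig{\mbasis i}$ forces $a>b$ and hence $\hd{\proj{b\mbasis i}}=b\hd{g_i}<a\hd{g_i}=\hdp u$. Set $w\defeq u-b\mbasis i$. The $\leq$-maximal terms of $\mmor u$ and of $b\,\mmor{\mbasis i}$ are both equal to the term $\sig u$, so they cancel and $\sig w<\sig u$; by hypothesis $w$ \sigrede{}s to zero. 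On the other hand every term of $\proj{b\mbasis i}$ is $<\hdp u$, so subtracting $\proj{b\mbasis i}$ from $\proj u$ leaves the leading term untouched: $\hdp w=\hdp u$, and in particular $\proj w\neq 0$.

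To conclude I would argue that any $w$ with $\proj w\neq 0$ that \sigrede{}s to zero is necessarily top reducible. \sigred{} examines the terms of $\proj w$ from the largest downwards; the leading term $\hdp w$ is processed first, and were there no reducer for it of signature $\leq\sig w$ it would be placed into the remainder $r$, contradicting $\proj r=0$. Thus there are a term $a'$ and an index $j$ with $\hdp{a'\mbasis j}=\hdp w$ and $\sig{a'\mbasis j}\leq\sig w$. Since $\hdp w=\hdp u$ and $\sig w<\sig u$, this $a'\mbasis j$ satisfies $\hdp{a'\mbasis j}=\hdp u$ and $\sig{a'\mbasis j}\leq\sig u$, so $u$ is top reducible.

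The remaining points are bookkeeping: the multiplicativity identities $\sig{t\mbasis k}=t\sig{\mbasis k}$ and $\hd{tg_k}=t\hd{g_k}$, the equivalence $a\sig{\mbasis i}>b\sig{\mbasis i}\biimp a>b$ coming from the compatibility axiom relating the two orders, and the coefficient adjustments needed to turn a divisibility of monomials into an equality of terms. The one genuinely non-computational step is the last one --- that a nonzero polynomial which \sigrede{}s to zero must be top reducible --- and I expect that to be the small crux, since it requires reasoning about how the \sigdiv{} process treats the leading term rather than merely quoting the four defining properties of the remainder.
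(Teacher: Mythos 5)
Your proof is correct and follows essentially the same route as the paper's: take $a$ with $\hdp{a\mbasis i}=\hdp u$ and $b$ with $\sig{b\mbasis i}=\sig u$; if $a\leq b$ then $a\mbasis i$ top reduces $u$ directly, and otherwise $u-b\mbasis i$ has the same lead term but strictly smaller signature, hence \sigrede{}s to zero, hence is top reducible, and its top reducer also top reduces $u$. The only difference is that you spell out the step ``nonzero and \sigrede{}s to zero implies top reducible,'' which the paper uses without comment; your justification of it is fine.
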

\begin{proof}
Let $a$ and $b$ be monomials such that
\[
\hdp{a\mbasis i}=\hdp u
\quad\text{ and}\quad
\sig{b\mbasis i}=\sig u.
\]
If $a\leq b$ then $a\mbasis i$ top reduces $u$ and we are
done. Otherwise $a>b$ so that
\[
\hdp{u-b\mbasis i}=\hdp u
\quad\text{ and }\quad
\sig{u-b\mbasis i}<\sig u.
\]
Then $u-b\mbasis i$ is top reducible and whatever top reduces it will
top reduce $u$.
\end{proof}

\subsection{Pseudo Code}

We have to reiterate that the pseudo code given in Section
\ref{sec:pseudo} is a simplest possible version of the algorithm. Any
reasonable implementation would include at least the S-pair
elimination criteria introduced in Section \ref{sec:sigImp} as well as
many other improvements that have been developed for \grobner{} basis
computation. The pseudo code is intended as a way to succintly state
the essence of \galgex{} without getting bogged down in the
complexities of an efficient implementation.

\section{Signature Improvements}

We add notes to the material in Section \ref{sec:sigImp} and give the
proofs that we did not have space to include in the main paper.

\subsection{S-pair Elimination}
\label{app:spairElim}

Something that we did not have space to dwell on in the main paper is
the widely understood distinction between eliminating an S-pair
\emph{early} and eliminating it \emph{late}. There are two points in
an S-pair's life time where it is natural to try to eliminate it. The
first opportunity is right as it gets created, and the second
opportunity is right before it would otherwise cause a reduction to be
carried out.

Everything else being equal, it is better to eliminate an S-pair early
rather than late. The reason for that is that if an S-pair is
eliminated late, then it had to be stored for a time, which increases
memory consumption, and it had to be compared to other S-pairs to
determine which S-pair has the minimal signature, which takes
time. Sometimes an S-pair can only be eliminated late, for example
when the syzygy signature that eliminates the S-pair is only
discovered after the S-pair is constructed.

It would also be possible to eliminate an S-pair sometime between
early and late. However, as there are many S-pairs, trying to
eliminate them all every time new information comes in would take a
lot of time. The S-pair triangle (see Section \ref{sec:sptri})
minimizes the overhead of storing many S-pairs, so we are not too
concerned about eliminating S-pairs early, even though we do want to
do so when possible.

As we state at the end of Section \ref{sec:spairElim}, Arri and Perry
remark \cite[Remark 20]{apSyzygyF5} that it is possible to apply the
singular criterion early. We have implemented this technique and we
report times for using it in Table \ref{tab:sbTimes}. From that table
we see that applying the singular criterion early causes our program
to run slower, for example for mayr42 the time increases from 273s to
337s. This is due to the extra time it takes to check the singular
criterion on all of the S-pairs that are not eliminated by the other
early criteria. We have used a kd-tree to check the singular criterion
--- otherwise it would have been much slower. Applying the singular
criterion early does decrease the number of queued S-pairs by 53\% for
yang1, which is an advantage since it decreases the amount of memory
used to store pending S-pairs.

We postpone the duplicate criterion and the relatively prime criterion
to maximize the number of S-pairs that are eliminated. Let $p$ and $q$
be two S-pairs with the same signature and suppose that the relatively
prime criterion can eliminate $p$. We can eliminate $q$ due to the
duplicate criterion and then eliminate $p$ due to the relatively prime
criterion. The order there is important. If we first eliminate $p$ due
to the relatively prime criterion, then we will not have any way to
eliminate $q$. We postpone the two criteria so that we can check if
any S-pair in a given signature is relatively prime. In contrast, we
do not postpone the signature criterion because if it applies to one
S-pair in a signature, then it applies to all of them so there is no
reason to delay.

Note that criteria that get applied early look more impressive in
Table \ref{tab:sigSPair} because they get checked for many more
S-pairs. For example if an S-pair can be eliminated both by the
signature criterion and the Koszul criterion, then that will count
only as a hit for the signature criterion since that S-pair is then
eliminated so that the Koszul criterion never gets the opportunity to
eliminate it.

\repthm{corollary}{col:spairSyzygy}{
Let $u\in R^n$ such that all $v\in R^n$ with $\sig v<\sig u$ reduce to
zero. Suppose there exists a syzygy $h\in R^n$ whose signature divides
the signature of $u$. Then $u$ regular reduces to zero.
}
\begin{proof}
Let $u\p$ be the result of regular reducing $u$. Let $a$ be the
monomial such that $\sig{ah}=\sig u$. Now $ah$ and $u\p$ have the same
signature and they are both regular reduced so Lemma \ref{lem:p3}
implies that $\proj{u\p}=\proj{ah}=0$.
\end{proof}

\repthm{corollary}{cor:p4}{
Let $p$ be an S-pair and let $p\p$ be the result of regular reducing
$p$. Let $M$ be the finite set
\[
M\defeq\setBuilder{a\mbasis i}{a\text{ is a monomial and }\sig{a\mbasis i}=\sig p}.
\]
Then all elements of $M$ regular reduce to $p\p$. Also, $p\p$ is
singular top reducible if and only if some element of $M$ is regular
top reduced.
}
\begin{proof}
All elements in $M$ regular reduce to $p\p$ by Lemma
\ref{lem:p3}. Suppose that $a\mbasis i\in M$ is regular top
reduced. Then $\hdp{a\mbasis i}=\hdp{p\p}$ by Lemma \ref{lem:p3} so
$a\mbasis i$ top reduces $p\p$. Suppose instead that $p\p$ is top
reducible. Then there is a singular top reducer $a\mbasis i$ such that
$\hdp{a\mbasis i}=\hdp{p\p}$ and $\sig{a\mbasis
  i}\simeq\sig{p\p}=\sig{p}$.  Then $a\mbasis i$ is regular top
reduced since $p\p$ is. Also, there is an element $s$ of the ground
field such that $sa\mbasis i\in M$.
\end{proof}

\subsection{Base Divisors}

The base divisor criterion from Section \ref{sec:basediv} is mainly
useful when there are a very large amount of basis elements, as that
is when handling S-pairs can take a lot of time. The technique also
requires storing a triangle of $\binom{k}{2}$ bits where $k$ is the
size of the basis. That can take a significant amount of memory when
$k$ is very large. So an overhead in memory is only imposed when there
is at the same time a significant advantage in time.

Running out of memory is worse than taking a little longer, since the
computation cannot proceed if there is not enough memory. However, if
the computer runs out of memory when using the base divisor technique,
then it is possible to simply drop the triangle of bits and stop using
the base divisor technique from that point onward. So in this way we
can view the base divisor technique as a way to use spare memory to
speed up the computation, but that memory can be freed if it is
needed.

We give the proofs from Section \ref{sec:basediv} that we did not have
space to include in the main paper.

\repthm{theorem}{thm:highRatio}{
Let $\alpha,\beta,\gamma\in R^n$ such that $\hdp\alpha|\hdp\beta$ and
$\frac{\sig\gamma}{\hdp\gamma}> \frac{\sig\alpha}{\hdp\alpha},
\frac{\sig\beta}{\hdp\beta}$. Then
$\sig{\spair\alpha\gamma}|\sig{\spair\beta\gamma}$.
}
\begin{proof}
To ease notation, let $x^a\defeq\hdp\alpha$, $x^b\defeq\hdp\beta$ and
$x^c\defeq\hdp\gamma$. The assumptions about sig-lead ratios imply
that
\[
\sig{\spair\alpha\gamma}=
\frac{\hdp\alpha}{\gcd(\hdp\alpha,\hdp\gamma)}\sig\gamma
\]
and that
\[
\sig{\spair\alpha\gamma}=
\frac{\hdp\beta}{\gcd(\hdp\beta,\hdp\gamma)}\sig\gamma.
\]
So in vector notation we need to prove for each $i$ that
\[
\min(b_i,c_i)-\min(a_i,c_i)\leq b-a.
\]

\proofPart{\text{The case }a_i,b_i\geq c_i} Equivalent to $a_i\leq b_i$.

\proofPart{\text{The case }a_i> c_i > b_i} Does not happen as $a_i\leq
b_i$.

\proofPart{\text{The case }a_i\leq c_i\leq b_i} Equivalent to $c_i\leq
b_i$.

\proofPart{\text{The case }a_i,b_i\leq c_i} Equivalent to $b_i-a_i\leq
b_i-a_i$.
\end{proof}

\repthm{theorem}{thm:lowRatio}{
Let $\alpha,\beta,\gamma\in R^n$ such that $\sig\alpha|\sig\beta$ and
$ \frac{\sig\gamma}{\hdp\gamma}< \frac{\sig\alpha}{\hdp\alpha},
\frac{\sig\beta}{\hdp\beta}$. Let
$x^p\defeq\frac{\hdp\alpha\sig\beta}{\sig\alpha}$,
$x^a\defeq\hdp\alpha$ and $x^b\defeq\hdp\beta$. Define $v$ by
$v_i\defeq\infty$ for $b_i\leq p_i$ and $v_i\defeq\max(p_i,a_i)$
otherwise.  Then $\sig{\spair\alpha\gamma}|\sig{\spair\beta\gamma}$ if
and only if $\hdp\gamma|x^v$.
}
\begin{proof}
The assumptions about sig-lead ratios imply that
\[
\sig{\spair\alpha\gamma}=\frac{\sig\gamma}{\gcd(\hdp\alpha,\hdp\gamma)}\sig\alpha
\]
and that
\[
\sig{\spair\beta\gamma}=\frac{\sig\gamma}{\gcd(\hdp\beta,\hdp\gamma)}\sig\beta.
\]
Let $x^q\defeq\frac{\sig\beta}{\sig\alpha}$, $x^c\defeq\hdp\gamma$ and
define $r$ by $r_i=\infty$ for $b_i\leq q_i+a_i$ and $r_i=q_i+a_i$
otherwise. Then what we need to prove for each $i$ is that
\[
\min(b_i,c_i)-\min(a_i,c_i)\leq q_i
\]
if and only if $c\leq\max(a,r)$.

\proofPart{\text{The case }c_i\leq a_i} Both inequalities are always
satisfied in this case since $A|B$ implies that $q_i\geq 0$.

\proofPart{\text{The case }c_i>a_i} We need to prove that
$\min(b_i,c_i)\leq q_i+a_i$ if and only if $c_i\leq r_i$, which follows
quickly from considering the two cases $b_i>q_i$ and $b_i\leq q_i$.
\end{proof}

\subsection{Sig-Lead Ratios}

Section \ref{sec:ratio} shows that sig-lead ratios occur in many
places in \galgex{}. We have wondered if there might be some
mathematical significance to that, but we have not yet found one.

\subsection{Stop on Detecting \grobner{} Basis}
\label{sec:stopgb}

\galg{} sometimes computes a \grobner{} basis much sooner than it
computes a signature \grobner{} basis. So if all we want is a
\grobner{} basis, then we want to stop early.

Call a basis element $\alpha$ \emph{essential} if its lead term
$\hdp\alpha$ is not divisible by the lead term of any other basis
element. Eder, Gash and Perry had the idea to stop the F5 algorithm
early when no S-pair remains that is between two essential basis
elements \cite{f5ExtraTerm}. This idea also works for \galg{} if we
additionally require that at least one S-pair $\spair\beta\gamma$ has
been reduced for each non-essential basis element $\beta$ where
$\hdp\beta|\hdp\gamma$. This is not hard to prove using the classic
Buchberger S-pair elimination criteria. Unfortunately, this criterion
for detecting a \grobner{} basis is not useful for \galg{} --- usually
very little computation can be skipped.

There are sophisticated approaches designed to ensure termination of
F5 using classic \grobner{} basis criteria
\cite{arsThesis,f5ExtraTerm, gashThesis}. We are only concerned with
increasing speed by stopping early, and we have used a very simple
approach. To get an ``if and only if'' criterion for detecting a
\grobner{} basis, we run a classic Buchberger algorithm on the
basis. The classic computation never adds a polynomial to the basis;
when it discovers a lead term that cannot be reduced by the current
basis, it pauses until that lead term can be reduced. This can cause
significant overhead, but it is gua\-ran\-te\-ed to stop the
computation as soon as the basis is a \grobner{} basis.

Using this technique, we see a 30x speed up on yang1 and a 17x
slowdown on katsura11. This technique is only a benefit when the
signature \grobner{} basis is much larger than the minimal \grobner{}
basis, but if we already know that ahead of time, then probably we
should not use \galg{} in the first place. For hcyclic8, the very last
basis element to be added to the signature basis is in fact an
essential basis element. So for hcyclic and examples like it, there is
not much to win from this idea even if there were a zero-overhead way
of doing it.

\section{Data Structures}

We give more background on the material in Section \ref{sec:ds}.

\subsection{Ordering Terms During Reduction}
\label{app:prio}

We are quite surprised at our result that once you apply a hash table,
then it matters little whether you use a geobucket, a heap or a
tournament tree to order the terms. This suggests that there are many
like terms when performing polynomial reduction in \galg{}. If that
hypothesis is correct it explains why hash tables are performing well
in our experiments, since hash tables immediately sum all the like
terms into a single term. If that leaves only a few terms that go into
the priority queue, then that also explains why we are seeing that the
choice of priority queue is less important to the running time. This
is a topic that we want to investigate more closely in future work. An
important topic that we have not yet addressed is what happens when
using packed monomials. The packed monomial technique only applies to
ideals with few enough variables and small enough degree of monomials
in the basis, but many interesting ideals fall into that category.

Some readers may wonder why we bother investigating the classic setup
for polynomial reduction when everyone knows that matrix-based
reduction as in F4 \cite{f4} is much better. We have several
reasons. Priority queues are used throughout the implementation for
several different purposes, so it is important to investigate priority
queues for \grobner{} basis computation regardless. F4 is not a win
for polynomial reduction of a single polynomial, which is an operation
that algorithms outside of \grobner{} basis computation sometimes have
to do.

More importantly, from our conversations with researchers in the
field, we know that implementing F4 to be more efficient than the
classic approach is tricky and that several people have failed in
their attempts. We have even been given the advice to write an F4
implementation such that the reduction steps taken are exactly the
same as those that would be done by classic polynomial reduction, so
that the benefit would accrue only from the replacement of monomials
with column-index-integers. The F4 implementation in Macaulay 2
follows this advice, and as Table \ref{tab:examples} shows, Macaulay 2
gets a very respectable time on hcyclic-8 even though F4 is the only
special thing it does. This indicates to us that it is possible that
better data structures for classic polynomial reduction might make it
just as good as F4 is. We cannot determine as a field if that is true
or not without looking for better data structures for classic
polynomial reduction, and that includes considering better choices of
priority queues. The requirement to store very large matrices in
memory is also a disadvantage of F4.

We give more details on how we have implemented the priority
queues. Most any text book on heaps will explain how to pack a heap
into an array and the basic heap algorithms for inserting and removing
elements. However, we have been unable to find a reference that
collects the various techniques that go beyond the basics. The
literature that mentions the word ``heap'' consists of more than
40,000 articles on Google Scholar, so it seems likely that one of them
explains how to write a good implementation. Yet we have found no such
article, so we collect the improvements to heaps that we know of here,
since these techniques are necessary to be able to replicate our
findings about heaps.

\subsubsection*{Start indices at 1}
If the heap's root is placed at index 0 of the array, then the
formulas for the left child $l(n)$, right child $r(n)$ and parent
$p(n)$ of the node at index $n$ are (division by 2 rounds down)
\[
p(n)=\frac{n-1}{2},\quad
l(n)=2n+1,\quad
r(n)=2n+2.
\]
If the heap's root is placed at index 1, the formulas become instead
\[
p(n)=\frac{n}{2},\quad
l(n)=2n,\quad
r(n)=2n+1.
\]
The latter formulas are more efficient, so place the root at index 1
instead of 0. This can be achieved by leaving the space at index 0
unused, or if using pointers it can be done by subtraxting 1 from the
pointer to the array. Though be aware that subtracting 1 from a
pointer to an array will calculate an invalid pointer. Even merely
calculating an invalid pointer without dereferencing it invokes
undefined behavior according to the C++ standard. However, actual
systems seem to have no problem with it.

\subsubsection*{Make pop move element to bottom of heap before replacement with leaf}

Pop on a heap is frequently described by replacing the top element by
the right-most bottom leaf and then moving it down until the heap
property is restored. This requires 2 comparisons per level that we go
past, and we are likely to go far down the heap since we moved a leaf
to the top of the heap. So we should expect a little less than
$2\log(n)$ comparisons.

Instead, as is a common technique in implementations, we can leave a
hole in the heap where the top element was. Then we move that hole
down the heap by iteratively moving the larger child up. This requires
only 1 comparison per level that we go past. In this way the hole will
become a leaf. At this point we can move the right-most bottom leaf
into the position of the hole and move that value up until the heap
property is restored. Since the value we moved was a leaf we do not
expect it to move very far up the tree. So we should expect a little
more than $\log n$ comparisons, which is better than before.

This technique is widely used in heap implementations, where the
heuristic argument above is borne out in practice through showing
better heap performance.

\subsubsection*{Support replace-top}
Suppose you want to remove the max element and also insert a new
element. Then you can follow the pop algorithm described above, except
that instead of moving the right-most bottom leaf into the vacant
position, you use the new value you wish to push. This is more
efficient than a pop followed by a push.

\subsubsection*{Make the elements have a size in bytes that is a power of 2}
The parent, left-child and right-child formulas work on indices and
they cannot be made to work directly on pointer values. So we are
going to be working with indices and that implies looking up values in
an array $p$ from an index $i$. If $p$ points to an array of $T$s
which each take up $s$ bytes, then the element at offset $i$ in the
array has address $p+s*i$. Using C++ notation, we have that
% can't use verbatim together with the appendix hide command
\[\textrm{
\tt \&(p[i]) == static\_cast<char*>(p) + sizeof(T) * i.
}\]
The computer has to perform this computation to get at the element
with index $i$. $s$ is a compile-time constant, and the multiplication
can be done more efficiently if sizeof(T) is a power of two. We found
an increase in performance by adding 8 padding bytes to increase $s$
from 24 to 32. Reduced efficiency of the cache probably means that
this is not a win for sufficiently large data sets.

\subsubsection*{Pre-multiply indices}
The only thing a heap ever does with an index other than finding
parent, left-child and right-child is to look the index up in an
array. So if we keep track of $j\defeq s * i$ instead of an index $i$,
then we could do a lookup of the element at index $i$ without the
multiplication that would otherwise be necessary since the address of
the element with index $i$ is $p+s*i=p+j$. In C++ notation, we have
that
% can't use verbatim together with the appendix hide command
\begin{align*}
\textrm{\tt
\&(p[i])
}&\textrm{\tt
 == static\_cast<char*>(p) + sizeof(T) * i
}\\&\textrm{\tt
  == static\_cast<char*>(p) + j
}
\end{align*}
Then the left-child and right-child formulas for $j$-values become
respectively $2j$ and $2j + s$. The parent formula becomes
$(j/(2s))*s$ where the division rounds down. In C++ notation, we have
that the parent of $j$ is
% can't use verbatim together with the appendix hide command
\[\textrm{\tt
(j / (2 * sizeof(T))) * sizeof(T).
}\]
If $s$ is a power of 2 then this will compile to 2 bit-shifts. That is
one operation more than the usual parent using indices i. However we
then save one operation on lookup. So the net effect is that finding
the parent takes the same amount of time this way, while lookup of
left-child and right-child becomes faster.

\subsubsection*{Memory optimization}

There are heap-like priority queues that are designed to make better
use of the CPU cache. For example a 4-heap should have fewer cache
misses and the amount of extra overhead is not that much. LaMarca and
Ladner report in 1996 that they get a 75\% performance improvement
from going to aligned 4-heaps \cite{cacheHeap}. However, Hendriks
reports in 2010 that \cite{cacheHeap2}:

\begin{quote}
The improvements to the implicit heap suggested by LaMarca and Ladner
to improve data locality and reduce cache misses were also tested. We
implemented a four-way heap, that indeed shows a slightly better
consistency than the two-way heap for very skewed input data, but only
for very large queue sizes. Very large queue sizes are better handled
by the hierarchical heap.
\end{quote}

Based on this we have chosen not to investigate alternative
cache-friendly heap layouts further.

\subsection{Tournament Trees}

A tournament tree is a classic priority queue data structure that
consists of a binary tree where each node is labeled by a value. Each
interior node's value is the maximum of the values of its two
children. Thus the top of the tree will have the maximum element of
the tree. The data structure is called a tournament tree since such
trees describe for example a tennis tournament where the roots
describe the players and the internal nodes describe a match between
two players. The player at the root of the tree won all his matches
and thus wins the tournament.

Our tournament trees are complete binary trees so that we can pack
them into an array just as is typically done for heaps. We use the
same code to navigate the tree, so the comments for navigating heaps
also apply to our tournament tree implementation. The pre-multiply
indices improvement is not useful, though, since we only ever go up
the tree, and that optimization does not speed up calculating the
parent of a node.

Replacing a value $a$ with another value $b$ is especially fast in a
tournament tree. Every value is annotated with the index of the leaf
node where it comes from, so given any position in the tree, we can
quickly jump to the leaf with the same value. Once the leaf for $a$ is
located, we change its value to $b$ and follow the path from that leaf
to the node while updating the values in the nodes along the way. This
requires only one comparison per level of the tree even in the worst
case.

\subsection{Monomial Ideal Data Structures}
\label{app:monds}

We add detail to Section \ref{sec:monds}.

\subsubsection{Kd-trees}

A leaf in our kd-tree is split into two smaller leaves if it contains
too many monomials. The index of the new internal node is $i+1$ if its
parent has index $i$, starting over at the first variable if there is
no variable $i+1$. We tried choosing the exponent $k$ in $x_{i+1}^k$
to be the median exponent of $x_{i+1}$ among the monomials in the leaf
being split, but we found that it works just as well to let $k$ be the
average of the maximum and minimum exponents of $x_{i+1}$ among the
monomials.

In our implementation we use a special encoding of the binary tree
such that each node on a left-going path down the tree is packed into
a single ``super node''. This technique was very complicated to
implement and gave only a tiny improvement in speed (<5\%), so we
recommend the usual representation of a binary tree where an internal
node has a pointer to each of its children.

The kd-tree can become unbalanced after many insertions and deletions,
especially since we never remove leaves. To combat this, we completely
rebuild the tree after a certain percentage of the tree has changed
since the last rebuild.

\subsubsection{Divmasks}

Table \ref{tab:divmask} shows the hit rates for divmasks in our
implementation of \galg{} when using a monomial list. We say that a
pair of monomials $(a,b)$ is a \emph{divmask hit} if the divmask
proves that $a$ cannot divide $b$. We say that $(a,b)$ is a
\emph{divmask miss} if $a$ does not divide $b$, but the divmask fails
to prove that. We say that $(a,b)$ is \emph{divisible} if $a$ divides
$b$. A divmask can do nothing useful in case of divisibility.  The
\emph{hit rate} for divmasks is the ratio of hits to the sum of hits
and misses.  The \emph{effective hit rate} is the ratio of hits to all
checks for divisibility involving a divmask.

\begin{table*}
\centering
\begin{tabular}{l|rrrrrr}&joswig101&jason210&katsura10&katsura11&hcyclic8&mayr42
  \\
\hline
\# divmask hits&6,347,442,512&674,026,292&81,661,319&703,178,965&19,629,163,403&327,387,283,068 \\
\# divmask misses&1,256,265,766&10,583,467&773,448&4,198,228&204,086,138&2,988,880,796 \\
\#
divisibilities&1,109,093,540&1,667,422&533,541&2,605,217&56,053,435&160,872,762
\\
\hline
hit rate&83.5\%&98.5\%&99.1\%&99.4\%&99.0\%&99.1\% \\
effective hit rate&72.9\%&98.2\%&98.4\%&99.0\%&98.7\%&99.0\% \\

\end{tabular}
\caption{Divmask hit rates}
\label{tab:divmask}
\end{table*}

For most of the examples we get a 99\% hit rate and a 98\% effective
hit rate. This is the case even for mayr42 where there are 51
variables, while our divmasks are 32 bits long so the divmask is
constructed based on the first 32 variables only. joswig101 gets a
lower hit rate of 84\%, but Table \ref{tab:sbTimes} still shows a
large speed up from using divmasks. The lower hit rate is likely due
to the example having only 5 variables.

For the divmap $d_{x_i^t}$ we chooose the exponent $t$ to be the
average of the minimum and maximum exponent of $x_i$ among the
monomials in the data structure. We also tried to use the median, but
there was no advantage.

We use a 32 bit word for the divmasks. If there are less than 32
variables then each variable $x_i$ gets more than one divmap, and we
space the exponents $t$ in the range between the minimum and maximum
exponent of $x_i$. If there are more than 32 variables, then the
variables past the first 32 are ignored when constructing a divmask.

Even if there are more than 32 variables and the program is being
compiled for and run on a 64 bit CPU, it has been slower in our
experiments to use a 64 bit divmask. They do give a higher hit rate on
for example yang1, but the hit rate is already high for 32 bits, so
the increased hit rate did not make up for the extra overhead of
dealing with more bits. We tried 16 bit divmasks too and they were
also worse than 32 bits.

\subsection{The S-pair triangle}

In Section \ref{sec:sptri} we mention a scheme to use 16 bits for
entries in columns in an S-pair triangle when possible and then
switching to 32 bits for columns past $2^{16}$. This idea can be
extended to use $b$ bits from column $2^{b-1}$ to column $2^b-1$. This
scheme is complicated to implement and we calculated that the memory
savings are tiny. We conclude that the split into 16 and 32 bits
already extracts most of the possible benefit, so there is not much
reason to go further.

On yang1 we spend a substantial amount of time on sorting the S-pairs
in each column of the S-pair triangle.  The sorting algorithm we use
to sort each column is the {\tt std::sort} function from the standard
C++ library of GCC. It uses a variant of quicksort. We suspect that
the S-pairs are being constructed in roughly increasing order of
signature, since the basis is ordered by signature, so the array being
sorted should be already in roughly sorted order --- not exactly in
sorted order, but closer to it than a random permutation. Sorting
algorithms that run more quickly on roughly sorted data are called
\emph{adaptive}. Quicksort is still $\Theta(n\log n)$ even when
running on already sorted input. There might be an advantage to be had
from using an adaptive sorting algorithm to sort the columns of the
S-pair triangle.

\section{Our Classic Buchberger Algorithm Implementation}

Our classic Buchberger implementation uses the data structures that we
propose in this paper. Other than that, it has an interesting S-pair
elimination criterion based on an old unpublished theorem of Dave
Bayer. We had originally described classic S-pair elimination criteria
in the paper, but we had to cut it to make the paper fit within the
page limit. What we wrote on the matter is now in this appendix.

\subsection{The lcm and Relatively Prime Criteria}

Many S-polynomials reduce to zero and it is better to predict that and
eliminate the S-pair instead of performing the reduction. Recall that
if all S-pairs have a representation (see \appRef{app:classicDiv}),
then the current polynomial basis is a \grobner{} basis. Buchberger
already introduced two criteria for making that prediction. Let $a$,
$b$ and $c$ be basis elements. If $\hd a$ and $\hd b$ are relatively
prime then $(a,b)$ can be eliminated. This is Buchberger's first
criterion. Call it the \emph{relatively prime criterion}. If $\hd
c|\lcm(\hd a,\hd b)$ and both $\spoly a c$ and $\spoly b c$ have a
representation, then so does $\spoly a b$. This is Buchberger's second
criterion. Call it the \emph{lcm criterion}.

Keeping track of S-pairs takes both time and space in addition to the
time spent on reductions. Therefore it is good to eliminate S-pairs as
early as possible in the computation. So if $\hd c|\lcm(\hd a,\hd b)$
then it is tempting to eliminate $(a,b)$ right away even if $\spoly a
c$ or $\spoly b c$ have not been eliminated yet. This could be done
using the argument that we will eventually process $(a,c)$ and $(b,c)$
so we will eventually eliminate $(a,b)$ so we might as well do it
right away. \emph{Do not believe this argument.} The argument is
incorrect because of the possibility that $\hd a|\lcm(\hd c,\hd
b)$. In that case we would eliminate $(a,b)$ based on an assumption
that we will process $(a,c)$ later, and we would also eliminate
$(a,c)$ based on an assumption that we will process $(a,b)$
later.

There is a revised approach which does work. Assume that $\hd
c|\lcm(\hd a,\hd b)$. Then we can eliminate $(a,b)$ right away if
\[
\lcm(a,c)\neq\lcm(a,b)\text{ or }(a,c)\text{ is eliminated,}
\]
and
\[
\lcm(b,c)\neq\lcm(a,b)\text{ or }(b,c)\text{ is eliminated.}
\]
Here an S-pair is also considered to be eliminated if it has been
reduced. In this case there is no possibility of a circular
argument. There are many alternatives to this particular way of
applying the lcm criterion. Suppose $\lcm(\hd a,\hd b)=\lcm(\hd a,\hd
c)=\lcm(\hd b, \hd c)$. Then one and only one of the S-pairs among
$a,b,c$ can be eliminated. The approach outlined here in effect lets
the ordering on the S-pairs decide - the S-pair ordered to be reduced
last is the one to get eliminated. If the S-pair ordering is good then
the last pair should also be the pair that we would most like to avoid
reducing, so this is a good choice. Furthermore, this scheme is simple
to think about and simple to implement correctly.

Using the lcm criterion in this way requires a way to quickly
determine if a given S-pair $(a,b)$ has been eliminated. We solved
this problem by keeping a triangular array of $\binom{k}{2}$ bits (not
bytes) which supports constant time access to a bit for each S-pair.

The monomial data structures from Section \ref{sec:monds} can be used
to find the divisors of $\lcm(\hd a,\hd b)$. However, there can be so
many S-pairs that the lcm criterion becomes a bottleneck even with
those data structures. To get around this, we cache the elements of
$B$ that often end up as the $c$ such that $\hd c|\lcm(\hd a,\hd
b)$. If $c$ is the element that eliminates $(a,b)$ using the lcm
criterion, then we associate $c$ to both of $a$ and $b$, and we forget
any earlier such association to $a$ and $b$. The next time we consider
an S-pair involving $a$, we see that $c$ was useful before and check
$c$ first before performing a full search of all divisors. Given a
pair $(a,b)$, there will be two cached generators to check --- one for
$a$ and one for $b$. We were surprised by the effectiveness of this
approach --- see \appRef{app:eval}.

\subsection{The Graph Criterion}

The lcm criterion has been improved on by Gebauer and M\"oller
\cite{gebMolCrit}. They propose a technique that quickly constructs a
near-minimal set of S-pairs.  It is well-known that in order to obtain
a \grobner{} basis, it suffices to reduce only those S-pairs which
correspond to a minimal generating set of the syzygy module of the
lead terms of the basis.  Caboara, Kreuzer and Robbiano show that
there is an advantage to be had by getting this minimal set instead of
relying on heuristics \cite{minSyzGB}. They propose an algorithm to
obtain the minimal set which in some cases leads to a speed up. There
is also an overhead to the computation so that it does not always pay
off.

In the 1980's, Dave Bayer came up with an unpublished alternative
graph-based characterization of the minimal set of S-pairs that form a
generating set. We use it to obtain the minimal set without much
overhead. Given a polynomial basis $B$ and a monomial $m$, define an
undirected graph $G_m$ with vertices
\[
\setBuilder{g\in B}{\hd g\text{ divides }m}
\]
such that $(a,b)$ is an edge if $\lcm(\hd a,\hd b)\neq m$ or if
$(a,b)$ has already been eliminated. We eliminate an S-pair $(a,b)$ if
$a$ and $b$ are connected in $G_{\lcm(a,b)}$. The set of S-pairs that
are not eliminated then correspond to a minimal \grobner{} basis of
the set of syzygies on the initial terms of the basis. Call this
criterion the \emph{graph criterion}. Observe that the lcm criterion
as described above is a special case of this more powerful criterion,
but that the lcm criterion has less overhead. We use both criteria in
our implementation.

If there is a cycle in $G_m$, then any edge $(a,b)$ on the cycle with
$\lcm(\hd a,\hd b)=m$ can be eliminated. As before, we in effect let
the ordering on the S-pairs choose which S-pair it would least like to
do.

In practical terms we implement this criterion by computing the graph
$G_{\lcm(a,b)}$ for each S-pair $(a,b)$ just before $\spoly a b$ would
otherwise have been reduced. The nodes of the graph can be determined
quickly using the monomial data structures from Section
\ref{sec:monds}. The graphs are usually small and the edges are quick
to construct, so the overhead is not much. Especially not since the
early and late lcm criterion and the relatively prime criterion
already eliminate most of the S-pairs. It should be possible to make
significant optimizations by caching parts of the graph, but we have
seen no need to improve this part of our implementation as it does not
take much time.

\subsection{Evaluating S-pair Elimination Criteria}
\label{app:eval}

Table \ref{tab:prune} shows how many S-pairs are eliminated by each
criterion. Every row in this table shows something interesting. The
relatively prime criterion is very effective on yang1, eliminating
52\% of the S-pairs. For yang1, the cache idea works so well that 98\%
of the S-pairs that the lcm criterion can eliminate are eliminated
already from just looking at the cache. The graph criterion gets only
an extra 4\% eliminated S-pairs compared to not using it on yang1, and
on 4by4 the number is down to 0.2\%.

This business of counting the number of S-pairs that are eliminated is
a good first approximation, but ultimately what counts is the time
saved due to the eliminated S-pairs. In some cases a small number of
zero reductions can account for a large amount of the running time of
the algorithm, so it can be the case that the graph criterion can
eliminate just a few S-pairs and still contribute a significant speed
up.

Another point is that it is not very informative to compare the number
of zero reductions between Table \ref{tab:prune} and Table
\ref{tab:sigSPair}. The overhead from \galgex{} is not just in
reducing to zero, it is also in reducing to a basis element that is
part of the signature \grobner{} basis but not part of the minimal
\grobner{} basis. A better comparison would be in terms of total
number of divisor queries, monomial multiplications, monomial
comparisons and ground field operations. Even that is not perfect, but
it is better than just looking at the number of reductions performed,
let alone just looking at the number of reductions to zero. This is a
kind of measure that we are looking into having our implementation be
able to report.

\begin{table}
\centering
\begin{tabular}{lrrrrr}
Example&4by4&jason210&mayr42&yang1 \\
\hline
\#S-pairs  & 108,811 & 404,550 & 36,410,311 & 11,331,180\\
rel prime    & 27,683 & 2 & 664,223 & 5,426,819 \\
lcm cache hits      & 50,051 & 353,112 & 30,995,451 & 5,643,927 \\
lcm simple hits     & 26,182 & 45,604 & 4,411,353 & 162,472 \\
lcm graph hits      & 12 & 81 & 5,363 & 3,964 \\
\hline
\#reductions & 4883 & 5751 & 333,921 & 93,998 \\
0-reductions & 4416 & 4851 & 325,387 & 89,237\\
\end{tabular}
\caption{Classic Buchberger S-pair elimination}
\label{tab:prune}
\end{table}
}

\bibliographystyle{abbrv}
\bibliography{references}
\end{document}